\newtheorem{theorem}{Theorem}[section]
\newtheorem{definition}[theorem]{Definition}
\newtheorem{lemma}[theorem]{Lemma}
\newtheorem{example}[theorem]{Example}
\newtheorem{remark}[theorem]{Remark}
\title{Price-Setting of Market Makers: A Filtering Problem with an Endogenous Filtration}
\author{Christoph K\"uhn\footnote{Frankfurt MathFinance Institute, Goethe-Universit\"at, D-60054 Frankfurt a.M., Germany,
e-mail: \{ckuehn, riedel\}{\char'100}math.uni-frankfurt.de\newline
The second named author acknowledges the financial support by the Deutsche Forschungsgemeinschaft~(DFG).} 
\quad and\quad  Matthias Riedel\footnotemark[\value{footnote}]\\ }
\date{}
\numberwithin{equation}{section}
\newcommand{\bbr}{\mathbb{R}}
\newcommand{\bbn}{\mathbb{N}}
\newcommand{\bbq}{\mathbb{Q}}
\newcommand{\bbf}{\mathbb{F}}
\newcommand{\eps}{\epsilon}
\newcommand{\lam}{\lambda}
\newcommand{\rate}{\lambda}
\newcommand{\F}{\mathcal{F}}
\newcommand{\N}{\mathcal{N}}
\newcommand{\B}{\mathcal{B}}
\newcommand{\A}{\mathcal{A}}
\newcommand{\ask}{\overline{S}}
\newcommand{\bid}{\underline{S}}
\newcommand{\cdl}{c\`adl\`ag }
\newcommand{\norm}[1]{\left\|#1\right\|}
\begin{document}
\maketitle

\begin{abstract}
We study the price-setting problem of market makers under risk neutrality and perfect competition in continuous time. Thereby we follow the classic Glosten-Milgrom model~\cite{glosten1985bid} 
that defines bid and ask prices as expectations of a true value of the asset given the market makers' partial information that includes the customers trading decisions. 
The true value is modeled as a Markov process that can be observed by the customers with some noise at Poisson times. 

We analyze the price-setting problem in a mathematically rigorous way by solving a filtering problem with an endogenous filtration that depends on the bid and 
ask price process quoted by the market maker. Under some conditions we show existence and uniqueness of the price processes.
\end{abstract}

\begin{tabbing}
{\footnotesize Keywords: Market making, bid-ask spread, stochastic filtering, point processes}\\ 
{\footnotesize JEL classification: G12, G14.} \\
{\footnotesize Mathematics Subject Classification (2000): 60G35, 91G80, 60G55.} 
\end{tabbing}

\section{Introduction}
In specialist markets one or several market makers (also called specialists) provide liquidity by offering to buy or to sell the respective asset at any time. 
They quote both a bid price at which they commit themselves to buy 
and a higher ask price at which they sell. 
By doing so market makers face certain risks for which they are compensated by the bid-ask spread. 

The risk can be decomposed mainly into two components: inventory and information risk. Inventory risk describes the risk that market makers or other liquidity providers 
accumulate large positive or 
negative inventories in the respective asset and then prices move against them. In a continuous time framework this was studied by Ho and Stoll~\cite{ho1981optimal}
and, more recently, further developed as optimal stochastic control problems by Avellaneda and Stoikov~\cite{avellaneda2008high}, 
Guilbaud and Pham~\cite{guilbaud2012optimal}, Veraart~\cite{veraart2010optimal} and Cartea and Jaimungal~\cite{cartea2012risk} among others. 

The other risk market makers take is information risk, i.e. the risk that at least part of the customers have superior (or insider) information about 
the hidden true value of the asset and trade strategically to their own advantage and therefore to the disadvantage of the market maker. Thus, the market maker faces 
an adverse selection problem. Although the nature of the two types of risk is quite different, their effects are somehow similar. 
Namely, if a customer buys assets, the market maker will most likely raise both 
his bid and his ask price. On the one hand, because he wants to avoid further buying and stimulate the sell-side to control his inventory, and on the other hand because 
he believes that the purchase of the 
customer has conveyed some good news about the true value of the asset. 
Here, assuming that market makers are risk neutral, we concentrate on information risk, 
which was first studied by Copeland and Galai~\cite{copeland1983information} and more general and in continuous time by Glosten and Milgrom~\cite{glosten1985bid} who describe 
the prices as expectations of a hidden true value. This zero expected profit condition can be explained by risk neutrality and perfect competition among market makers. 
It leads to quite tractable models and may still be used as a benchmark for more involved situations.
An alternative approach is by  Kyle~\cite{kyle1985continuous} (developed further by Back~\cite{back1992insider})
who not only modelled
how the market makers handle the information flow from customers, but who also considers a strategically behaving insider optimally using his knowledge to his own advantage. 
However, in contrast to  Glosten-Milgrom, Kyle models a single price process and can therefore not explain the bid-ask spread. 
A connection to the Glosten-Milgrom model 
was established by Krishnan~\cite{krishnan1992equivalence} and more general by Back and Baruch~\cite{back2004information}.

Already in a static model showing or disproving the existence or the 
uniqueness of Glosten-Milgrom prices is a non-trivial issue and there are only quite few substantial contributions.
Bagnoli, Viswanathan, and Holden~\cite{BVH2001} derive necessary and sufficient conditions for the existence of a so-called linear equilibrium  
in a one-period model with several strategically behaving insiders. Linearity means that,
after observing the size of the arriving market order, the market maker quotes a price per share which is affine linear, but not constant, in the order size. 
The market maker can draw conclusions from the order size
about the typ of trader submitting the order. It turns out that linear equilibria only exist in special cases. 
Back and Baruch~\cite{back2004information} derive (in)equalities under which they prove the existence of an equilibrium in the continuous time Gloston-Milgrom model with a 
strategically behaving insider and
two possible states of the true asset value. Then, it is shown numerically that the (in)equalities have a solution and an equilibrium is constructed. 
The decision making in our model is very similar to Das~\cite{das2005learning, das2008effects}, who provides methods to simulate the Glosten-Milgrom price process in 
a discrete time model and examines some statistical properties of the prices in the market model numerically. 

We develop a mathematically rigorous continuous time Glosten-Milgrom model by solving a filter problem with an endogenous filtration
and show existence and uniqueness of the price processes under some conditions.
The bid and ask prices of the market maker are determined by the zero profit condition given his information about the time-dependent true value of the asset. 
However, this information, i.e. the filtration, depends again on the prices he sets, thus, he influences the learning environment by setting bid and ask prices and there appears 
a fixed point problem. If, for example, the market maker sets a very large spread, there will be only a small amount of trades on which he can base his estimation of the true value.  
Mathematically this means that the filter problem is w.r.t. a filtration that is not exogenously given but that is part of the solution. The filtration depends on the bid 
and ask price process which have for their part to be predictable w.r.t. the filtration. 
This is an essential difference to other filter problems in market microstructure models with a not directly observable true value of the asset where, however, also point processes are 
used, see e.g. the article by Zeng~\cite{zeng2003partially}.
We show that Glosten-Milgrom bid and ask price processes are  
fixed points of certain functionals acting on the set of stochastic processes and they are given by some deterministic functions of the conditional probabilities of the true value 
process (under the resulting partial information of the market maker). The conditional probabilities can be obtained as the solution of a system of SDEs.

In the literature on market making filtering problems with an endogenous filtration already appear in many articles on the Kyle model (and its generalizations), see  
Back~\cite{back1992insider}, Back and Baruch~\cite{back2004information}, Lasserre \cite{lasserre.2004}, Aase, Bjuland, and {\O}ksendal~\cite{aase.bjuland.oksendal.2012},
and Biagini, Hu, Meyer-Brandis, and {\O}ksendal~\cite{BHMO2012}, among others. In the Kyle model, a rational price process is characterized as the conditional expectation 
of the true value of the asset under the filtration 
of the market maker which itself depends on 
the price process through the demand of the insider. But, the inherent fixed point problem which is solved in a Brownian setting is fundamentally different (to the problem we solve)
as accumulated purchases and sells are continuous processes and new information arise continuously. In addition, note that the Kyle model cannot explain the bid-ask spread as
it models a single price process at which both buy and sell orders are executed.

The paper is organized as follows. In Section~\ref{model} the continuous time model is introduced and the main result (Theorem~\ref{main}) is stated.
Section~\ref{static} considers the static case. Under certain conditions we prove an existence and uniqueness result (Theorem \ref{solstat}).
In Section~4, we prove Theorem~\ref{main} using the results in Section~\ref{static}.

\section{The model and the main result} \label{model}

In the following we will develop a general model in continuous time for a specialist market, i.e. a market where a market maker or specialist offers to buy or sell at any point in time 
to the bid and ask prices he quotes.

All random variables that we introduce live on the probability space~$(\Omega, \F, P)$ whereas different filtrations are considered. We assume that the \cdl process
$X=(X_t)_{t\geq 0}$, interpreted as the time-dependent true value of the asset, is a time-homogeneous Markov process with finite state space $\{x_1,\ldots,x_n\}$, $n\geq2$ where $x_{\min}=x_1 < \ldots < x_n=x_{\max}$,
and has transition kernel
\[
q(i,j):=\lim_{t\to0} \frac{1}{t} P[X_t=j\ |\ X_0 = i].
\]
The market maker knows the distribution of $X$ but does not know the actual value. The only source of information
which is available to the market maker are the trades that take place at the prices he sets.

To model the customer flow, let $N$ be a Poisson process with rate $\lam>0$. We denote the ordered jump times of $N$ by $\tau_1< \tau_2<\tau_3,\ldots$.
We assume that at these times potential customers arrive at the market (unseen by the market maker). The customers have some disturbed information about the true value of the asset 
which is given by $X_{\tau_i} + \eps_i$ for the $i$-th customer where $(\eps_i)_{i\in \bbn}$ is a sequence of i.i.d. random variables. We assume that $X$,\ $N$, 
and $(\eps_i)_{i\in \bbn}$ are independent of each other.

We further assume that the market maker sets a pair of prices according to an $\F \otimes \B ([0, \infty))$-measurable mapping $S: \Omega \times [0,\infty) \to \bbr^2$. 
We write $S=(\ask,\bid)$ to denote ask and bid prices and we only admit prices with $\ask_t(\omega)\ge \bid_t(\omega)$ for all $(\omega,t)$. To be economically meaningful the strategy $S$ has to satisfy some predictability condition that will be given in Definition \ref{defad}.

A potential customer buys one asset if $X_{\tau_i}+\eps_i \geq \ask_{\tau_i}$ and sells one asset if $X_{\tau_i}+\eps_i \leq \bid_{\tau_i}$. He does nothing if his valuation is within the spread. 

In the decision making of the customers we follow Das \cite{das2005learning}. In the original Glosten-Milgrom paper \cite{glosten1985bid} a buy, say, occurs if 
$\rho_t E[X|\A] \geq \ask_{t}$, where $\rho_t$ is an independent random variable which represents time-preference and plays the role of $\eps_i$ in our model. 
$\rho_t \gg 1$ means that an impatient buyer arrives and $\rho_t \ll 1$ stands for an impatient seller.  
The sigma-algebra $\A$ represents the partial information of the insider. For $\A= \sigma(X)$, the models, including possible interpretation of $\eps_i$ and $\rho_t$, are quite similar. 
Further, note that the behavior of the customers is not rational. A rational exploitation of the given information would be to buy if $E[X_{\tau_i}|X_{\tau_i}+\eps_i] \geq \ask_{\tau_i}$. 
A high realization of $X_{\tau_i}+\eps_i$ might simply mean that $\eps_i$ is large, which the costumer may be well aware of if he knows the distributions of $X_{\tau_i}$ 
and $\eps_i$ separately. 
It was shown by Milgrom and Stokey \cite{milgrom1982information} that there has to be some irrational behavior for a price to exist. Very often in information-based models 
(for example in the famous Kyle model \cite{kyle1985continuous}) this irrational behavior is introduced by the assumption that there are two types of traders: Those who trade 
on superior information called insiders (with $\eps =0$) and those who trade for liquidity reasons, sometimes called noise traders (with $\eps=\pm \infty$). This describes a limiting case of the model we consider here, where customers have all kinds of noise or preference in their valuation.

Note that the volume of each trade is set to one. Hence, we ignore any volume effect. It is an disputed question among economists whether the volume of a trade has some information content (cf. \cite{o2007market}, p.160 ff.).

Let $B_0=C_0=0$. We introduce the sequence of random times of actual buys by 

\[B_i:=\inf\{\tau_j|  \tau_j > B_{i-1} , X_{\tau_j} + \epsilon_j \geq \ask_{\tau_j}\}, \qquad i\geq 1, \]

and a sequence of actual sells by 

\[C_i:=\inf\{\tau_j|  \tau_j > C_{i-1} , X_{\tau_j} + \epsilon_j \leq \bid_{\tau_j}\}, \qquad i\geq 1. \]

In addition we define the counting processes of actual buys and sells by

\begin{equation}\label{defN}
  N^B_t:= \sum_{i\geq 1} 1_{\{B_i \leq t\}} \qquad \mbox{and} \qquad N^C_t:= \sum_{i\geq 1} 1_{\{C_i \leq t\}}.
\end{equation}

The filtration of the market maker is given by $\bbf^S =(\F_t^S)_{t\geq 0}$, where
\begin{equation}\label{deffs}
\F_t^{S} := \sigma(\{B_i \leq s\}, \{C_i \leq s\}, s\leq t, i \in \mathbb{N})= \sigma\left(N^B_s, N^C_s, s\leq t\right).
\end{equation}

Since $\bbf^S$ is generated by counting processes, it is a right-continuous filtration  (see Theorem I.25 in \cite{protter2004stochastic}). However, it does not in general satisfy the usual conditions, since the null sets are not necessarily included. 

 \begin{figure}[htbp]
\begin{center}
 \includegraphics[scale=0.65]{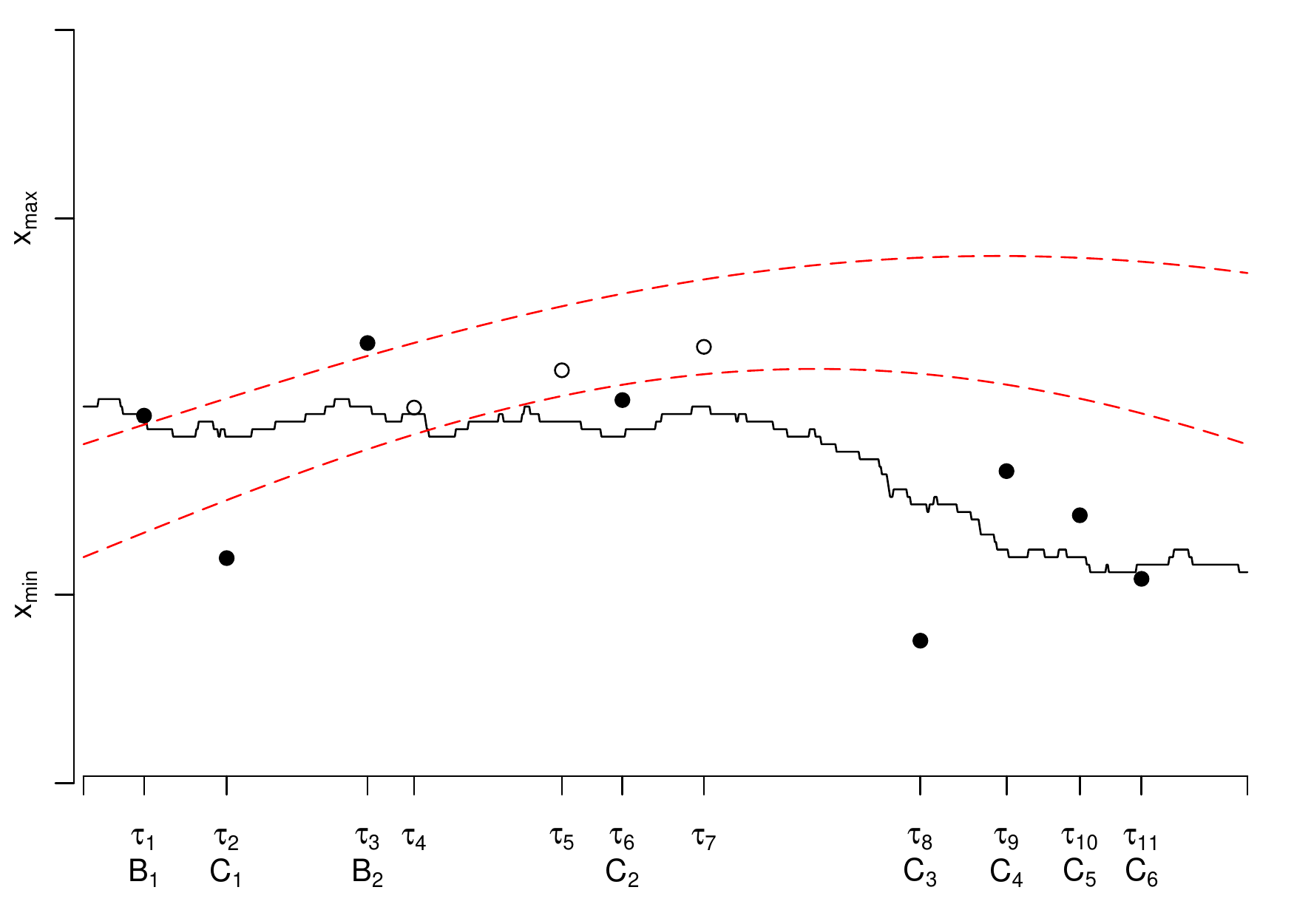}
\end{center}
 \caption{The black line represents the true value $X$ and some quoted prices $\ask \geq \bid$ (here not at equilibrium) are given by the dotted red lines. All potential trades $X_{\tau_i}+\eps_i$ are given by the bullets, which are filled if a trade takes place at $\ask$ or $\bid$.}
 \label{fig1}
 \end{figure}

>From an economic viewpoint pricing strategies of market makers make sense only if they are $\bbf^S$-predictable, as $\bbf^S$ is the information flow of the market maker.

\begin{definition}\label{defad}
We say that $S$ is an admissible pricing strategy if it is $\bbf^S$-predictable and $x_{\max} \geq \ask_t(\omega) \geq \bid_t(\omega)\geq x_{\min}$ for all $(\omega,t) \in \Omega \times \bbr_+$.
\end{definition}

We impose the restriction that the prices lie between $x_{\min}$ and $x_{\max}$ because otherwise there would be either arbitrage opportunities or no trades at all.
Note that the definition is quite implicit, since the filtration $\bbf^S$ depends itself on $S$.

The model stated above gives a natural, though complex, framework to examine price-setting of market makers. We now proceed to consider a certain type of price-setting which involves the 
Glosten-Milgrom idea of risk neutrality and
perfect competition between market makers.

\begin{definition}\label{admis}
We say that an admissible pricing strategy $S$ is a Glosten-Milgrom pricing strategy (GMPS) if 

\begin{equation}\label{20.9.2012.1}
E\left[\sum_{B_i \leq \tau} (\ask_{B_i}-X_{B_i})\right]= 0 \mbox{ and }  E\left[\sum_{C_i \leq \tau} (\bid_{C_i}-X_{C_i})\right]=0 
\end{equation}
for every bounded $\bbf^S$-stopping time $\tau$.
\end{definition}

Each summand in (\ref{20.9.2012.1})  is bounded by $x_{\max}-x_{\min}$ and the sequence of $B_i$ and $C_i$ is included in the Poisson times. This yields integrability of the sums. 
Note that this definition implies that not only the whole business makes zero profits but both the buy-side and sell-side business separately. We assume that 
in no stochastic time interval it is possible to make a gain in expectation. By perfect competition among market makers it is not possible to offset a loss to obtain overall zero profits.

\begin{theorem}\label{main}
Let $C:=x_{\max}-x_{\min}$ and $\Phi(y):=P[\eps_1 \geq y]$,\ $y\in\bbr$.
Assume that $\Phi$ is differentiable (i.e. the distribution of $\eps_1$ has density $-\Phi'$) on $[-C,C]$, $1>\Phi(0)>0$,  and
\[-\Phi'(y)\leq \frac{K}{C} \min\{\Phi(y), 1-\Phi(y)\}\]
for all $y\in [-C,C]$ and a constant $K< 1$. Then, there exists a Glosten-Milgrom pricing strategy and it is unique up to a  $(P\otimes\lam)$-null set, 
where $\lam$ denotes the Lebesgue measure on $\bbr_+$.
\end{theorem}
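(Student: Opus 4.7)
The plan is to reduce the continuous-time fixed-point problem to the construction of the conditional probability process $\pi_t := (P[X_t = x_i \mid \F^S_t])_{i=1,\ldots,n}$ and then apply the static result (Theorem \ref{solstat}) pointwise. More precisely, I expect to show that any GMPS $S$ must be of the form $\ask_t = \alpha(\pi_{t-})$ and $\bid_t = \beta(\pi_{t-})$ for deterministic functions $\alpha,\beta:\Delta^n\to[x_{\min},x_{\max}]$ defined by the static Glosten-Milgrom zero-profit equations. Indeed, since $\ask$ is $\bbf^S$-predictable and the compensated process $\sum(\ask_{B_i}-X_{B_i})1_{\{B_i\le t\}}$ is a martingale by (\ref{20.9.2012.1}), at a potential buy time the ask must equal the conditional expectation of $X_t$ given $\F^S_{t-}$ and the event that a buy occurs. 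With the customers' decision rule $X_{\tau_i}+\eps_i\ge\ask_{\tau_i}$ this translates, conditionally on $\pi_{t-}=\pi$, exactly into the static fixed-point equation solved in Section~\ref{static}.

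Next I would derive the filter SDE for $\pi$. Because $\bbf^S$ is the filtration generated by the counting processes $N^B,N^C$, the $\bbf^S$-compensator of $N^B$ is $\lam\sum_i \pi_i(t)\Phi(\ask_t-x_i)\,dt$ and similarly for $N^C$ with $1-\Phi(\bid_t-x_i)$. The standard innovation theorem for point processes (Br\'emaud-type) then yields a system of SDEs: between jumps of $(N^B,N^C)$, $\pi$ follows an ODE combining the Kolmogorov forward equation for $X$ with a ``no-trade'' drift coming from the compensators; at jump times $\pi$ updates by the explicit Bayes formula. Substituting $\ask_t=\alpha(\pi_{t-})$ and $\bid_t=\beta(\pi_{t-})$ produces a closed SDE for $\pi$ driven only by $(N^B,N^C)$.

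Existence and uniqueness of $\pi$ is then an inductive construction over the Poisson arrival times $\tau_k$, carried out on the original probability space. On each interval $(\tau_{k-1},\tau_k]$ the ``no-trade'' ODE has Lipschitz coefficients, so $\pi$ is uniquely determined from its value at $\tau_{k-1}$; at $\tau_k$, the decision $X_{\tau_k}+\eps_k\gtreqless \alpha(\pi_{\tau_k-}),\beta(\pi_{\tau_k-})$ is meaningful because $\pi_{\tau_k-}$ is already defined, and this determines whether $\tau_k$ is some $B_i$, $C_j$, or neither; then $\pi_{\tau_k}$ is given by the Bayes update. The Lipschitz continuity of $\alpha,\beta$ that this requires is precisely where the hypothesis $-\Phi'\le (K/C)\min\{\Phi,1-\Phi\}$ with $K<1$ enters: it should come out of Section~\ref{static} as the contraction bound that makes the static fixed point depend Lipschitz-continuously on the prior. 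Once $\pi$ is constructed, I would define $S$ by $\ask_t:=\alpha(\pi_{t-})$, $\bid_t:=\beta(\pi_{t-})$, verify that $\F^S_t$ coincides with the filtration generated by the constructed $N^B,N^C$, and use the compensator identities to conclude~(\ref{20.9.2012.1}) for every bounded $\bbf^S$-stopping time.

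The main obstacle is the circularity: $S$ must be predictable for a filtration that is itself generated by trades triggered by $S$. I would resolve it by observing that the map $\pi\mapsto(\alpha(\pi),\beta(\pi))$ is \emph{exogenous} once the static problem is solved, so the dynamics of $\pi$ can be written down on $(\Omega,\F,P)$ using only $X$, $N$ and $(\eps_i)$, and the filtration $\bbf^S$ is recovered a posteriori. Uniqueness up to $(P\otimes\lam)$-null sets then follows because any other GMPS $\widetilde S$ gives rise to a conditional probability process $\widetilde\pi$ satisfying the same SDE, and the contraction hypothesis yields pathwise uniqueness of that SDE between Poisson arrivals and hence globally.
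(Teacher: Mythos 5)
Your architecture is sound and, for the existence half, essentially coincides with the paper's: the paper also constructs the candidate conditional-probability process pathwise (ODE with Lipschitz coefficients between Poisson arrivals, explicit Bayes update at arrivals that trigger trades), defines $S_t=(G(\phi_{t-}),H(\phi_{t-}))$ with $G,H$ the static fixed-point maps, checks $\bbf^S$-adaptedness/predictability by the pathwise construction, and then verifies the filter equation to conclude via the fixed-point characterization; the Lipschitz dependence of $G,H$ on the prior is exactly Lemma~\ref{lippi} combined with $|G(\phi)-G(\widetilde\phi)|\le \frac{L}{1-K}\sum_i|\phi_i-\widetilde\phi_i|$, as you anticipated. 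Where you genuinely diverge is uniqueness. The paper does \emph{not} argue by pathwise uniqueness of a closed SDE; it shows that the functional $F(S)_t=(g(\ask_t,\pi^S_{t-}),h(\bid_t,\pi^S_{t-}))$ is a contraction in $L^1(P\otimes\lam|_{[0,t]})$ for small $t$ (Lemma~\ref{contr}), which requires the coupling estimate $P[A_s^c]\le 2\rate M E[\int_0^s\norm{S_u-T_u}du]$ comparing the trade streams generated by two \emph{different} strategies, and then iterates over time intervals. Your route --- first pin down that any GMPS must satisfy $\ask_t=G(\pi^S_{t-})$, $\bid_t=H(\pi^S_{t-})$ at Poisson times and $(P\otimes\lam)$-a.e., then observe that the whole system $(\pi^S, N^B, N^C)$ becomes a deterministic recursion in $(X,N,(\eps_i))$ whose uniqueness is immediate between arrivals --- is legitimate and arguably more elementary, since it replaces the coupling/Gronwall argument by ODE uniqueness; its only cost is that it leans entirely on the preliminary reduction, whereas the paper's contraction also does duty in organizing the existence verification.

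That reduction is where you are eliding the real technical content, and you should be aware that it is not a one-liner. Deducing from the zero-expected-profit condition at all bounded $\bbf^S$-stopping times that $\ask_{B_i}=E[X_{B_i}\mid\F^S_{B_i}]$ (the paper's Lemma~\ref{equidef}) requires identifying the generators of $\F^S_{B_i-}$ and proving $\F^S_{B_i-}=\F^S_{B_i}$ via the marked-point-process structure (buys and sells never coincide); translating this into the static equation $g(\ask_{B_i},\pi^S_{B_i-})=\ask_{B_i}$ uses the jump part of the filter equation (Lemmas~\ref{filtereq} and~\ref{fatb}); and upgrading equality at \emph{trade} times to equality at \emph{all} Poisson times --- which you need before you can feed $\ask_t=G(\pi^S_{t-})$ into the $dt$-term of the SDE --- uses $\Phi(C)>0$ from Lemma~\ref{allwbuy} to show that any predictable discrepancy would be hit by a buy with positive probability. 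None of this invalidates your plan, but a complete write-up must supply these steps; with them in place, your uniqueness argument closes.
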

The theorem is proven in Section \ref{proof}. 

\section{Glosten-Milgrom prices in a static model}\label{static}
As a first step to prove Theorem \ref{main} we will consider a static version of the dynamic Glosten-Milgrom model introduced in Section \ref{model} that also illustrates the idea of Glosten-Milgrom prices. It examines the situation at a time when a potential customer arrives at the market in the continuous time model.

In this section, let $X$ be a real-valued random variable which represents the true value of some asset. We assume that $X$ is unknown to all market participants, but the customer has a disturbed 
valuation given by $X + \eps$, where $\eps$ represents some observation error or time preference and is independent of $X$.  For the rest of the section we will only consider ask prices, 
since bid and ask prices can be determined independently from each other and bid prices are developed completely analogous. The independency of the price-setting problems is in 
contrast to the dynamic case, where some interdependency occurs as the filtration contains the information of both, buys and sells. We assume that a potential customer buys 
if his valuation is higher than the ask-price $s$. Thus, the profit of the market maker is given by $(s-X)1_{\{X+\eps \geq s\}}$. Again, 
we assume that the price-setting must confine a  zero-expected-profit condition. 

\begin{definition} We say that $s\in \bbr$ is a static Glosten-Milgrom ask price if
 \begin{equation} \label{fixstat}
 E[(s-X)1_{\{X+\eps \geq s\}}] = 0.
\end{equation}
\end{definition}

The question is now whether solutions to (\ref{fixstat}) exist and if so, whether they are unique. Roughly speaking, a Glosten-Milgrom-price exists and is unique if the tails of $\eps$ are ``heavy enough`` in comparison to those of $X$. Let us start with two simple examples that show cases of non-existence and non-uniqueness.

\begin{example}
Let $\eps =0$ and assume that $X$ is not essentially bounded from above. Then there exist no $s\in\bbr_+$ with $E[(s-X)1_{\{X \geq s\}}] = 0$, since the integrand is always 
non-positive and negative with positive probability. For $\eps =0$ and $X$ essentially bounded by $x_{\max}$ only $s=x_{\max}$ is a (trivial) solution in $(-\infty,x_{\max}]$.
\end{example}

\begin{example}
 Let $\eps$ be $1$ and $-1$ with probability $\frac{1}{2}$ and $X$ be $1$ with probability $\frac{3}{4}$ and $3$ with probability $\frac{1}{4}$, then $\frac{9}{5}$ and $3$ are both solutions of (\ref{fixstat}) and hence, the static Glosten-Milgrom ask price is not unique.
\end{example}

\begin{lemma}\label{allwbuy}
Let $X$ be bounded between $x_{\min}$ and $x_{\max}$ a.s. and define $C:=x_{\max}-x_{\min}$. Let $\Phi(y):= P[\eps \geq y]$ be the inverse distribution function of $\eps$. If $\Phi$ is differentiable (i.e. the distribution of $\eps$ has density $-\Phi'$) on $[-C,C]$, $\Phi(0)>0$  and 
\begin{equation} \label{condstat}
-\Phi'(y)\leq \frac{K}{C} \Phi(y)
\end{equation} for all $y\in [-C,C]$ and a constant $K< 1$, it follows that $\Phi(C)>0$, which implies that $P[X+\eps \geq s]>0$, i.e. the probability that a buy occurs is strictly larger than $0$, for all prices $s \leq x_{\max}$. 
\end{lemma}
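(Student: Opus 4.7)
The plan is to convert the hypothesis $-\Phi'(y)\leq (K/C)\Phi(y)$ into an exponential lower bound on $\Phi$ via a Gr\"onwall-type argument. Wherever $\Phi(y)>0$, dividing gives $(\log\Phi)'(y)\geq -K/C$, and integrating from $0$ to $C$ yields $\log\Phi(C)-\log\Phi(0)\geq -K$, i.e.\ $\Phi(C)\geq \Phi(0)e^{-K}$. Since $\Phi(0)>0$ and $K<1$, this is strictly positive.

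The only subtlety is that the logarithmic derivative only makes sense where $\Phi>0$, so I would handle this by setting $y^\ast:=\inf\{y\in[0,C]:\Phi(y)=0\}$ (with $\inf\emptyset:=C$). On $[0,y^\ast)$ the survival function $\Phi$ is strictly positive and differentiable, so the argument above applies and yields $\Phi(y)\geq \Phi(0)e^{-Ky/C}$ throughout $[0,y^\ast)$. By continuity this bound extends to $y^\ast$ itself, giving $\Phi(y^\ast)\geq \Phi(0)e^{-K}>0$. If $y^\ast<C$, this would contradict the definition of $y^\ast$; hence $y^\ast=C$ and $\Phi(C)\geq \Phi(0)e^{-K}>0$.

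To deduce the positive buying probability, I would condition on $X$: by the independence of $X$ and $\eps$,
\[
P[X+\eps\geq s]=E[\Phi(s-X)].
\]
For $s\leq x_{\max}$ one has $s-X\leq x_{\max}-x_{\min}=C$ almost surely, and since $\Phi$ is non-increasing as a survival function, $\Phi(s-X)\geq \Phi(C)>0$ a.s. Taking expectations preserves the strict inequality and gives $P[X+\eps\geq s]\geq \Phi(C)>0$.

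I expect the Gr\"onwall estimate to carry the entire content of the lemma; the only real care required is the brief argument above to avoid dividing by zero where $\Phi$ might otherwise vanish before reaching $C$.
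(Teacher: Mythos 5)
Your proof is correct, but it takes a slightly different route from the paper's. The paper never divides by $\Phi$: it writes $\Phi(C)=\Phi(0)+\int_0^C\Phi'(t)\,dt$, bounds $-\Phi'(t)\leq\frac{K}{C}\Phi(t)\leq\frac{K}{C}\Phi(0)$ using monotonicity of $\Phi$, and gets the linear bound $\Phi(C)\geq(1-K)\Phi(0)>0$ in two lines; this is where the proof of the lemma actually uses $K<1$, and the vanishing-of-$\Phi$ issue never arises. Your Gr\"onwall argument via $(\log\Phi)'\geq-K/C$ yields the sharper bound $\Phi(C)\geq\Phi(0)e^{-K}$, and in fact establishes $\Phi(C)>0$ for any finite constant $K$, not just $K<1$; the price is the extra $y^\ast$ device, which you handle correctly -- the contradiction at $y^\ast<C$ implicitly uses that the zero set of the continuous, non-increasing $\Phi$ is closed, so the infimum is attained and $\Phi(y^\ast)=0$, which is worth stating, and the integration of the logarithmic derivative is justified either by the density assumption (so $\Phi$ is absolutely continuous and bounded away from zero on $[0,y]$ for $y<y^\ast$) or simply by the mean value theorem applied to $\log\Phi$. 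The final step, $P[X+\eps\geq s]=E[\Phi(s-X)]\geq\Phi(C)>0$ for $s\leq x_{\max}$, coincides with the paper's argument.
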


\begin{proof}
Remember that $\Phi$ is in $[0,1]$ and decreasing. We have
\begin{align*}
\Phi(C)  &= \int_0^C \Phi'(t)dt +\Phi(0)  \geq  \frac{K}{C} \int_0^C - \Phi(t) dt +\Phi(0) \\
& \geq -\frac{K}{C} C  \Phi(0) +\Phi(0) \geq (1-K) \Phi(0)  > 0,
\end{align*}
since $K<1$ and $\Phi(0)>0$. Furthermore we have for all $s\le x_{\max}$
\[P[X+\eps \geq s] = P[\eps \geq s-X] \geq P[\eps \geq x_{\max}-x_{\min}] =  \Phi(C)>0. \qedhere \]
\end{proof}

Under the assumptions of Lemma \ref{allwbuy} we make the following definition.

\begin{definition} \label{defgh}
We indicate the distribution of $X$ by $\pi$. For $s \in [x_{\min}, x_{\max}]$ we define
\[g(s,\pi):= E[X|X+\eps \geq s]:= \frac{E\left[X 1_{\{X+\eps \geq s\}}\right]}{P\left[X+\eps \geq s\right]}=\frac{E[X \Phi(s-X)]}{E[ \Phi(s-X)]} .\]
\end{definition}

Lemma \ref{allwbuy} ensures that $g$ is well-defined for every $\pi$. Now, the zero-profit condition (\ref{fixstat}) translates to 

\[g(s,\pi)= s.\]

Thus, for given $\pi$, the question of existence and uniqueness of a Glosten-Milgrom ask price is the same as the existence and uniqueness of a fixed point of $g$, which leads us the way to prove the following theorem.

\begin{theorem} \label{solstat}
Let all assumptions of Lemma \ref{allwbuy}  be fulfilled. Then there exists an unique static Glosten-Milgrom ask price in $[x_{\min},x_{\max}]$.
\end{theorem}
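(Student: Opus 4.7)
The plan is to recast the zero-profit condition~\eqref{fixstat} as the fixed-point equation $g(s,\pi)=s$, define $h(s):=g(s,\pi)-s$ on $[x_{\min},x_{\max}]$, obtain existence via the intermediate value theorem, and obtain uniqueness by showing that $h$ is strictly decreasing.

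For existence, I first observe that since $X\in[x_{\min},x_{\max}]$ a.s.\ and the weights $\Phi(s-X)$ are strictly positive by Lemma~\ref{allwbuy}, $g(s,\pi)$ is a weighted average of values in $[x_{\min},x_{\max}]$ and hence $g(\cdot,\pi)$ maps this interval into itself. In particular $h(x_{\min})\geq 0$ and $h(x_{\max})\leq 0$. Continuity of $h$ on $[x_{\min},x_{\max}]$ follows from dominated convergence, since the denominator $E[\Phi(s-X)]$ is bounded below by $\Phi(C)>0$ (Lemma~\ref{allwbuy}) and $\Phi$ itself is continuous. The intermediate value theorem then produces at least one zero of $h$ in $[x_{\min},x_{\max}]$.

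For uniqueness, the idea is to differentiate and show $g'(s)<1$. The hypothesis $-\Phi'\leq (K/C)\Phi$ on $[-C,C]$ makes $\Phi'$ bounded, so by dominated convergence both $E[\Phi(s-X)]$ and $E[X\Phi(s-X)]$ are $C^1$ in $s$. Writing $\tilde\pi_s$ for the probability measure whose $\pi$-density is proportional to $\Phi(s-X)$, one has $g(s,\pi)=E_{\tilde\pi_s}[X]$ and a direct computation yields
\[
g'(s)=E_{\tilde\pi_s}\!\left[\bigl(X-g(s,\pi)\bigr)\,\phi(s,X)\right],\qquad \phi(s,X):=\frac{\Phi'(s-X)}{\Phi(s-X)}.
\]
The hypothesis forces $\phi(s,X)\in[-K/C,0]$. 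Since $E_{\tilde\pi_s}[X-g(s,\pi)]=0$, subtracting any constant $c$ from $\phi$ inside the expectation leaves the value unchanged; choosing $c=-K/(2C)$ gives $|\phi(s,X)-c|\leq K/(2C)$, and combining with the bound $|X-g(s,\pi)|\leq C$ yields $|g'(s)|\leq K/2<1$. Consequently $h'(s)=g'(s)-1<0$, so $h$ is strictly decreasing and its zero is unique.

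The main technical step is controlling $g'$ by the ``hazard rate'' quantity $-\Phi'/\Phi$: the centering trick that turns the absolute bound $|\phi|\leq K/C$ into the sharper $|\phi-c|\leq K/(2C)$ is the one place where the structural hypothesis on $\Phi$ is used in full strength, and it is precisely what guarantees the strict contraction and hence uniqueness. The remaining ingredients (Brouwer/IVT for existence, bounded-convergence arguments for regularity) are routine given Lemma~\ref{allwbuy}.
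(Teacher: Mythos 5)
Your proof is correct, and it reaches the result by a somewhat different route than the paper. The paper bounds the same derivative $g'(s)$ by writing the quotient-rule numerator as a double expectation over $X$ and an independent copy $Z$, obtains $g'(s)\leq K$ (and, by the same estimate with the opposite sign, $g'(s)\geq -K$), and then concludes existence and uniqueness in one stroke from the Banach fixed point theorem applied to the contraction $g(\cdot,\pi)$ on $[x_{\min},x_{\max}]$. You instead split the argument: existence by the intermediate value theorem for $h(s)=g(s,\pi)-s$ (using that $g(\cdot,\pi)$ maps $[x_{\min},x_{\max}]$ into itself, a self-mapping property the Banach route needs implicitly as well), and uniqueness from strict decrease of $h$. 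Your derivative estimate, via the tilted measure $\tilde\pi_s$ and the identity $g'(s)=E_{\tilde\pi_s}[(X-g(s,\pi))\,\Phi'(s-X)/\Phi(s-X)]$, is a valid reformulation of the paper's computation, and the centering at $c=-K/(2C)$ does yield the sharper bound $|g'(s)|\leq K/2$. Note, however, that the centering is not where the hypothesis is used ``in full strength'': dropping it, the crude bound $|g'(s)|\leq E_{\tilde\pi_s}[|X-g(s,\pi)|\,|\Phi'(s-X)|/\Phi(s-X)]\leq C\cdot (K/C)=K<1$ already suffices, which is exactly the paper's estimate, so the improvement by a factor $2$ is a bonus rather than a necessity. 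Two minor points: dominated convergence (with the mean value theorem and $|\Phi'|\leq K/C$) gives differentiability of $s\mapsto E[\Phi(s-X)]$ and $s\mapsto E[X\Phi(s-X)]$, but not the $C^1$ regularity you assert, since continuity of $\Phi'$ is not assumed; fortunately only differentiability is needed for your monotonicity argument. Also, your argument never needs a sign for $g'$, which is a small advantage, as the paper's stated lower bound $g'\geq 0$ is not actually established by its computation (it would require a log-concavity-type condition on $\Phi$) and is not needed there either.
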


For parametric families of distributions of $\eps$ (as e.g. the normal distribution) the central condition (\ref{condstat}) can usually be secured by choosing parameters such that the variance is high. In economic terms this corresponds to customers whose information is less precise or who are impatient. We still allow that $\eps$ takes the values $\pm \infty$. 

Note that we make no assumptions on the distribution of $X$ apart from the boundedness but quite explicit assumptions on the distribution of $\eps$. The fact that we have existence and uniqueness for all distributions of $X$ with compact support will be central in the continuous time model.

\begin{proof} 
Since $\pi$ is fixed we omit it. We consider the derivative of 
\[g(s)= E[X|X+\eps \geq s] = \frac{E[X \Phi(s-X)]}{E[ \Phi(s-X)]}\]

for $ x_{\min}\leq s\leq x_{\max}$ which is given by

\begin{align*}
g'(s)&= \frac{E_X[X \Phi'(s-X)]E_Z[\Phi(s-Z)]- E_Z[Z \Phi(s-Z)]E_X[\Phi'(s-X)]}{(E_X[\Phi(s-X)])^2}\\
&=\frac{E_X[E_Z [X \Phi'(s-X)\Phi(s-Z)- Z \Phi(s-Z)\Phi'(s-X)]]}{E_X[E_Z[\Phi(s-X)\Phi(s-Z)]]}\\
&=\frac{E_X[E_Z [-\Phi'(s-X)\Phi(s-Z)(Z- X)]]}{E_X[E_Z[\Phi(s-X)\Phi(s-Z)]]}\\
&\leq \frac{E_X[E_Z [-\Phi'(s-X)\Phi(s-Z)|Z- X|]]}{E_X[E_Z[\Phi(s-X)\Phi(s-Z)]]}\\
&\leq  C \frac{E_X[E_Z [-\Phi'(s-X)\Phi(s-Z)]]}{E_X[E_Z[\Phi(s-X)\Phi(s-Z)]]}\\
&\leq  C \frac{K}{C}\frac{E_X[E_Z [ \Phi(s-X)\Phi(s-Z)]]}{E_X[E_Z[\Phi(s-X)\Phi(s-Z)]]}\\
& = K,
\end{align*}
for $K$ from (\ref{condstat}) where $Z$ is an independent copy of $X$. Hence, $0\leq g'(s) \leq K<1$ for all $s\in [x_{\min}, x_{\max}]$  and therefore 

\begin{equation}\label{contra}
 |g(s)-g(t)| \leq K |s-t|.
\end{equation}

This means that $g$ is a contraction which has a unique fixed point by the Banach fixed point theorem.
\end{proof} 

We have already seen that $g$ is Lipschitz-continuous in $s$ with parameter $K<1$ in (\ref{contra}). If $X$ has discrete distribution we further obtain Lipschitz-continuity 
in the distribution $\pi$ (which we will use in the proof of  Theorem~\ref{main}).

\begin{lemma}\label{lippi}
Let all assumptions of Lemma \ref{allwbuy} be fulfilled. In addition assume that $X$ takes only finitely many values, i.e. there exist $x_{\min}=x_1 < ... < x_n=x_{\max}$ and $\pi = (\pi_1, ..., \pi_n)$ such that $P[X=x_i] = \pi_i$ for all $i$ and $\sum_{i=1}^n \pi_i =1$.  Then 

\[|g(s,\pi)-g(\widetilde{s},\widetilde{\pi})|\leq K |s-\widetilde{s}| + L \sum_{i=1}^n|\pi_i-\widetilde{\pi}_i|\]
for $K$ from (\ref{condstat}) and $L=\frac{2 x_{\max}}{\Phi(C)^2}<\infty$, all $s,\widetilde{s}\in [x_{\min},x_{\max}]$ and all distributions $\pi,\widetilde{\pi}$. 
\end{lemma}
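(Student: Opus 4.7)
The plan is to decompose the Lipschitz bound via the triangle inequality into two independent parts,
\[
|g(s,\pi)-g(\widetilde s,\widetilde\pi)|\ \leq\ |g(s,\pi)-g(\widetilde s,\pi)|\ +\ |g(\widetilde s,\pi)-g(\widetilde s,\widetilde\pi)|,
\]
and handle each part separately. The first summand is already covered by Theorem~\ref{solstat}: the computation of $g'(s)$ there yields $0\le g'(s)\le K$ uniformly in the (fixed) distribution of $X$, so $|g(s,\pi)-g(\widetilde s,\pi)|\le K|s-\widetilde s|$. So only the $\pi$-dependence is new.

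For the second summand, I would exploit the discreteness to write $g$ as an explicit ratio. With $X$ taking values $x_1,\dots,x_n$ with probabilities $\pi_1,\dots,\pi_n$, set
\[
a(\pi):=\sum_{i=1}^n x_i\,\Phi(\widetilde s-x_i)\,\pi_i,\qquad b(\pi):=\sum_{i=1}^n \Phi(\widetilde s-x_i)\,\pi_i,
\]
so that $g(\widetilde s,\pi)=a(\pi)/b(\pi)$. Using the algebraic identity
\[
\frac{a(\pi)}{b(\pi)}-\frac{a(\widetilde\pi)}{b(\widetilde\pi)}\ =\ \frac{a(\pi)-a(\widetilde\pi)}{b(\pi)}\ -\ \frac{a(\widetilde\pi)\bigl(b(\pi)-b(\widetilde\pi)\bigr)}{b(\pi)\,b(\widetilde\pi)},
\]
both numerators are linear in the coordinates $\pi_i-\widetilde\pi_i$ and hence bounded by $\sum_i|\pi_i-\widetilde\pi_i|$ (after pulling out $|x_i|\le x_{\max}$ and $\Phi\le 1$). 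For the denominators, Lemma~\ref{allwbuy} gives the uniform lower bound $b(\pi)=E[\Phi(\widetilde s-X)]\ge \Phi(C)>0$ valid for any distribution, and clearly $a(\widetilde\pi)\le x_{\max}$.

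Putting the pieces together,
\[
|g(\widetilde s,\pi)-g(\widetilde s,\widetilde\pi)|\ \le\ \left(\frac{x_{\max}}{\Phi(C)}+\frac{x_{\max}}{\Phi(C)^2}\right)\sum_{i=1}^n|\pi_i-\widetilde\pi_i|\ \le\ \frac{2x_{\max}}{\Phi(C)^2}\sum_{i=1}^n|\pi_i-\widetilde\pi_i|,
\]
using $\Phi(C)\le 1$ in the last step, which yields $L=2x_{\max}/\Phi(C)^2$ as claimed. No step looks genuinely delicate: the main subtlety is merely that the bound on $b(\pi)$ must be independent of $\pi$, but this is exactly the content of Lemma~\ref{allwbuy}. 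The only place one has to pay attention is that the contraction constant $K$ from Theorem~\ref{solstat} was derived for fixed $\pi$ but is uniform, which is visible from the proof since condition~(\ref{condstat}) does not involve the law of $X$.
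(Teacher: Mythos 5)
Your proposal is correct and follows essentially the same route as the paper's proof: the same triangle-inequality split (with the roles of the two middle terms merely interchanged), the same representation of $g$ as a ratio of sums linear in $\pi$, the same lower bound $E[\Phi(\widetilde s-X)]\geq\Phi(C)>0$ from Lemma~\ref{allwbuy}, and the same constant $L=\frac{2x_{\max}}{\Phi(C)^2}$. Your observation that the contraction estimate (\ref{contra}) is uniform in the distribution because condition (\ref{condstat}) does not involve the law of $X$ is exactly how the paper uses it as well.
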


\begin{proof} 
First, we see that
\begin{equation}\label{fineq}
 \begin{aligned}
 |g(s,\pi)-g(\widetilde{s},\widetilde{\pi})| & = |g(s,\pi)- g(s,\widetilde{\pi}) + g(s,\widetilde{\pi}) -g(\widetilde{s},\widetilde{\pi})| \\
& \leq |g(s,\pi) -g(s,\widetilde{\pi})| + K |s- \widetilde{s}| 
\end{aligned}
\end{equation}
 
by (\ref{contra}). It remains to show that 
\[|g(s,\pi)-g(s,\widetilde{\pi})| \leq  L \sum_{i=1}^n|\pi_i-\widetilde{\pi}_i|.\]

To shorten notation we write
\[\alpha(f(X),\pi):= E[ f(X) \Phi(s-X)] = \sum_{i=1}^n \pi_i f(x_i) \Phi(s-x_i).\] 

Hence, we have 

\[g(s,\pi)= \frac{\alpha(X,\pi)}{\alpha(1,\pi)}\]

and
\begin{align*}
|g(s,\pi)-g(s,\widetilde{\pi})| =& \left|\frac{\alpha(X,\pi)}{\alpha(1,\pi)} - \frac{\alpha(X,\widetilde{\pi})}{\alpha(1,\widetilde{\pi})}\right|\\
=&  \frac{\left| \alpha(X,\pi) \alpha(1,\widetilde{\pi})- \alpha(X,\widetilde{\pi})\alpha(1,\pi)\right|}{\alpha(1,\pi)\alpha(1,\widetilde{\pi})}\\
\leq&  \frac{\left| \alpha(X,\pi) \alpha(1,\widetilde{\pi})- \alpha(X,\pi) \alpha(1,\pi)\right|}{\alpha(1,\pi)\alpha(1,\widetilde{\pi})}\\
&+\frac{\left|  \alpha(X,\pi) \alpha(1,\pi)-  \alpha(X,\widetilde{\pi}) \alpha(1,\pi)\right|}{\alpha(1,\pi)\alpha(1,\widetilde{\pi})}\\
=&  \frac{\left| \alpha(X,\pi) \sum_{i=1}^n (\widetilde{\pi}_i -\pi_i) \Phi(s-x_i)\right|}{\alpha(1,\pi)\alpha(1,\widetilde{\pi})}\\
&+\frac{\left| \alpha(1,\pi)\sum_{i=1}^n (\pi_i -\widetilde{\pi}_i) x_i \Phi(s-x_i) \right|}{\alpha(1,\pi)\alpha(1,\widetilde{\pi})}\\
\leq & L \sum_{i=1}^n|\pi_i-\widetilde{\pi}_i|.
\end{align*}

The last inequality follows from $x_i\Phi(s-x_i)  \leq x_{\max} < \infty$ and $\Phi(s-x_i)  \leq 1$ for all $i$ and $\alpha(1,\pi) \geq \Phi(s-x_{\min}) \geq \Phi(C) > 0$. Together with (\ref{fineq}) this proofs the lemma.
\end{proof} 

\begin{remark}
 It is easy to see, that the additional restrictions $1>\Phi(0)$ and 
\begin{equation}\label{condbid}
  -\Phi'(y)\leq \frac{K}{C}  (1-\Phi(y))
\end{equation}
for all $y\in [-C,C]$ and a constant $K< 1$ (together with the differentiability of $\Phi$) are those that we need to obtain unique static Glosten-Milgrom bid prices. In addition, denoting the analogon of $g$ by
\begin{equation}\label{defh}
 h(s,\pi):= E[X|X+\eps \leq s]
\end{equation}
Lemma \ref{lippi} also holds for the bid price. Together with the assumptions on $\Phi$ in Lemma \ref{allwbuy} (\ref{condbid}) results in the assumptions on $\Phi$ in Theorem \ref{main}.  
\end{remark}

\section{Proof of Theorem \ref{main}}\label{proof}
To proof existence and uniqueness of a solution of a GMPS we firstly characterize it as a fixed point of a functional $F$ (see Definition \ref{deff}) that is defined on the set of admissible pricing strategies (see Theorem \ref{fixsol}). Then, we show a contraction of $F$ (see Lemma \ref{contr}) which can be used to verify uniqueness. Finally, we show that $F$ possesses a fixed point (as $F$ does in general not map into the set of admissible strategies and the contraction holds in general only if the arguments are admissible strategies, we cannot use a Picard-iteration).

\subsection{Glosten-Milgrom strategies as fixed points}\label{21.9.2012.1}
As we mentioned earlier the filtration of the market maker $\bbf^S$ does not satisfy the usual conditions, since it does not contain all null sets. We now define the completion $\widetilde{\bbf}^S$ of $\bbf^S$.

\begin{definition}
For any $\F\otimes\B([0,\infty))$-measurable process $S=(\ask,\bid)$ let the filtration $\widetilde{\bbf}^S$ be defined by 

\[\widetilde{\F}_t^S := \F_t^S \vee \N,\]

where $\N$ are all $P$-null sets of $\F$.
\end{definition}

$\widetilde{\bbf}^S$ will be used in the proof, but note that it is not needed to state our main result, Theorem \ref{main}.

\begin{lemma}
For any $\F\otimes\B([0,\infty))$-measurable process $S=(\ask,\bid)$, there exists a unique (up to indistinguishability) $\widetilde{\bbf}^S$-adapted \cdl process $\pi^S$ with
\begin{equation} \label{filter}
 \pi_{\tau}^S = \left(P\left[X_{\tau}=x_i| \F_{\tau}^S\right]\right)_{i=1,\ldots,n} \qquad \mbox{$P$-a.s.}
\end{equation}
for all finite stopping times $\tau$. 
\end{lemma}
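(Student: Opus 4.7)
The plan is to first settle uniqueness, then construct $\pi^S$ via limits of conditional probabilities along rational times, and finally extend identity~(\ref{filter}) from deterministic to arbitrary finite stopping times. Observe first that $\widetilde{\bbf}^S$ satisfies the usual conditions: $\bbf^S$ is right-continuous as already cited (Protter, being generated by counting processes), and $\widetilde{\bbf}^S$ is complete by construction. Uniqueness is immediate: if $\pi^S$ and $\hat{\pi}^S$ are both càdlàg $\widetilde{\bbf}^S$-adapted processes satisfying~(\ref{filter}), then $\pi^S_r = \hat{\pi}^S_r$ $P$-a.s.\ at every rational $r \in \bbq_+$ (viewed as a constant stopping time); pooling the countably many exceptional null sets and using right-continuity of both paths yields indistinguishability.

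For existence, fix $i \in \{1,\ldots,n\}$ and note that $Y^i_t := 1_{\{X_t = x_i\}}$ is bounded and càdlàg because $X$ is càdlàg into the discrete set $\{x_1,\ldots,x_n\}$. For each $r \in \bbq_+$ select a version $Z^i_r$ of $P[X_r = x_i \mid \widetilde{\F}^S_r]$. For a sequence $r_k \downarrow t$, I would decompose
\begin{equation*}
Z^i_{r_k} - E[Y^i_t \mid \widetilde{\F}^S_t] = E[Y^i_{r_k} - Y^i_t \mid \widetilde{\F}^S_{r_k}] + \bigl(E[Y^i_t \mid \widetilde{\F}^S_{r_k}] - E[Y^i_t \mid \widetilde{\F}^S_t]\bigr).
\end{equation*}
The first piece tends to zero in $L^1$ by bounded convergence, using that $Y^i_{r_k} \to Y^i_t$ pointwise (because $X$ takes values in a discrete set, so its indicators eventually coincide with their limit). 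The second piece tends to zero in $L^1$ by backward martingale convergence combined with the right-continuity identity $\widetilde{\F}^S_t = \bigcap_k \widetilde{\F}^S_{r_k}$. An analogous argument for $r_k \uparrow t$ gives $L^1$-convergence to $E[Y^i_{t-} \mid \widetilde{\F}^S_{t-}]$ via forward martingale convergence.

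The main obstacle is upgrading this $L^1$-regularity along rationals to the existence of a single pointwise càdlàg modification on all of $[0,\infty)$. My approach is to invoke the optional projection theorem (Dellacherie--Meyer, Ch.~VI) to obtain an optional process $\pi^{S,i}$ coinciding $P$-a.s.\ with $Z^i_r$ at every rational $r$, and then to exploit the special structure of the problem: $X$ has only finitely many jumps on any bounded interval a.s.\ (finite state space with bounded rates $q(i,j)$), and $\widetilde{\bbf}^S$ receives new information only at trade times, which form a subset of the jump times of the Poisson process $N$. Between trade times $\pi^{S,i}$ should evolve deterministically according to the Kolmogorov forward equation for $q$ twisted by the ``no-trade'' likelihood, and should jump by an explicit Bayesian update at each trade; this renders the paths càdlàg by construction. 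Finally, identity~(\ref{filter}) extends from rational deterministic times to an arbitrary finite stopping time $\tau$ by approximation from above through dyadic discretizations $\tau_k \downarrow \tau$, using right-continuity of $\pi^S$ and of $\widetilde{\bbf}^S$ together with bounded convergence of conditional expectations.
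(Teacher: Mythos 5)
Your uniqueness argument, the $L^1$-continuity computation along rationals, and the final extension from deterministic times to stopping times via dyadic discretization are all correct and standard. The problem is the step you yourself flag as ``the main obstacle'': producing a \emph{c\`adl\`ag} version. $L^1$-right-continuity along rationals gives right-continuity in probability only, and the optional projection theorem in the form you invoke it delivers an optional process, not one with c\`adl\`ag paths. The structural argument you offer to close this gap (deterministic evolution between trade times governed by a twisted Kolmogorov forward equation, Bayesian jumps at trades) is only a sketch, and it is problematic for two reasons. First, it presupposes the filter dynamics, which the paper derives only later (Lemma \ref{filtereq}, via Br\'emaud) and whose derivation itself uses the existence of the c\`adl\`ag filter process $\pi^S$ asserted by this lemma; moreover, the intensity computation of Lemma \ref{rate} underlying those dynamics requires $\ask_{\tau_i}$ to be $\F_{\tau_i-}$-measurable, i.e.\ predictability of $S$, whereas the present lemma is stated for an \emph{arbitrary} $\F\otimes\B([0,\infty))$-measurable $S$. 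Second, even granting that $\F^S_t$-measurable variables are functions of the trade times up to $t$, so that between trades the conditional probability is ``a deterministic function of $t$'', nothing forces the versions chosen for different $t$ to fit together into a measurable, let alone right-continuous, path; making that choice consistently is precisely the content of the c\`adl\`ag optional projection theorem, so the argument is circular as it stands.

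The paper resolves this in one line by citing Theorems 2.7 and 2.9 of Bain and Crisan: since $\widetilde{\bbf}^S$ satisfies the usual conditions and each $1_{\{X_\cdot=x_i\}}$ is bounded and c\`adl\`ag, the optional projection exists \emph{and admits a c\`adl\`ag version}, with the projection property holding at all finite stopping times; one then passes from $\widetilde{\F}^S_\tau$ to $\F^S_\tau$ because the two conditional expectations differ only on a null set (a small point your write-up also skips, since you work with the completed filtration throughout while (\ref{filter}) is stated for $\F^S_\tau$). To repair your proof, either cite the c\`adl\`ag optional projection theorem in this full strength, or genuinely carry out the pathwise construction---which is essentially what the paper does much later, in the existence part of Theorem \ref{main}, and only for admissible $S$.
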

\begin{proof}
Since $\widetilde{\bbf}^S$ satisfies the usual conditions, we can apply Theorems 2.7 and 2.9 of  \cite{bain2008fundamentals} to the process $\left(1_{\{X_t=x_i\}}\right)_{i=1,\ldots,n}$, which gives us a \cdl optional projection $\pi^S$ that is $\widetilde{\bbf}^S$-adapted and satisfies 
\[\pi_{\tau}^S = \left(P\left[X_{\tau}=x_i| \widetilde{\F}_{\tau}^S\right]\right)_{i=1,\ldots,n} \qquad \mbox{$P$-a.s.}\]
for all finite stopping times $\tau$. Since $E[\,\cdot\,|\F_\tau]$ and $E[\,\cdot\,|\widetilde{\F}_\tau]$ only differ by a $P$-null set, (\ref{filter}) follows. 
\end{proof}

Since $\pi^S$ is \cdl, $\pi^S_{t-}$ is well defined and we can now define the before mentioned functional.

\begin{definition}\label{deff}
 For admissible $S$ we define $F(S):\Omega \times [0,\infty) \to \bbr^2$ by 

\[F(S)_t :=  \left(\overline{F(S)}_t,\underline{F(S)}_t\right):=\left(g\left(\ask_t, \pi^S_{t-}\right),h\left(\bid_t, \pi^S_{t-}\right)\right)\]
where $g$ is defined in Definition \ref{defgh}, $h$ in (\ref{defh}), and $\pi^S$ in (\ref{filter}).

\end{definition}

As a continuous function of $\widetilde{\bbf}^S$-predictable processes $F(S)$ is $\widetilde{\bbf}^S$-predictable. By definition of $g$ and $h$ and the fact that $S$ is admissible it follows that $\overline{F(S)}_t(\omega)\geq \underline{F(S)}_t(\omega)$ for all $(\omega,t)$. However, (also not after completion of the filtrations) $F(S)$ is not necessarily admissible, since in general $\bbf^S \neq \bbf^{F(S)}$. 

\begin{figure}[htbp]
\begin{center}
 \includegraphics[scale=0.65]{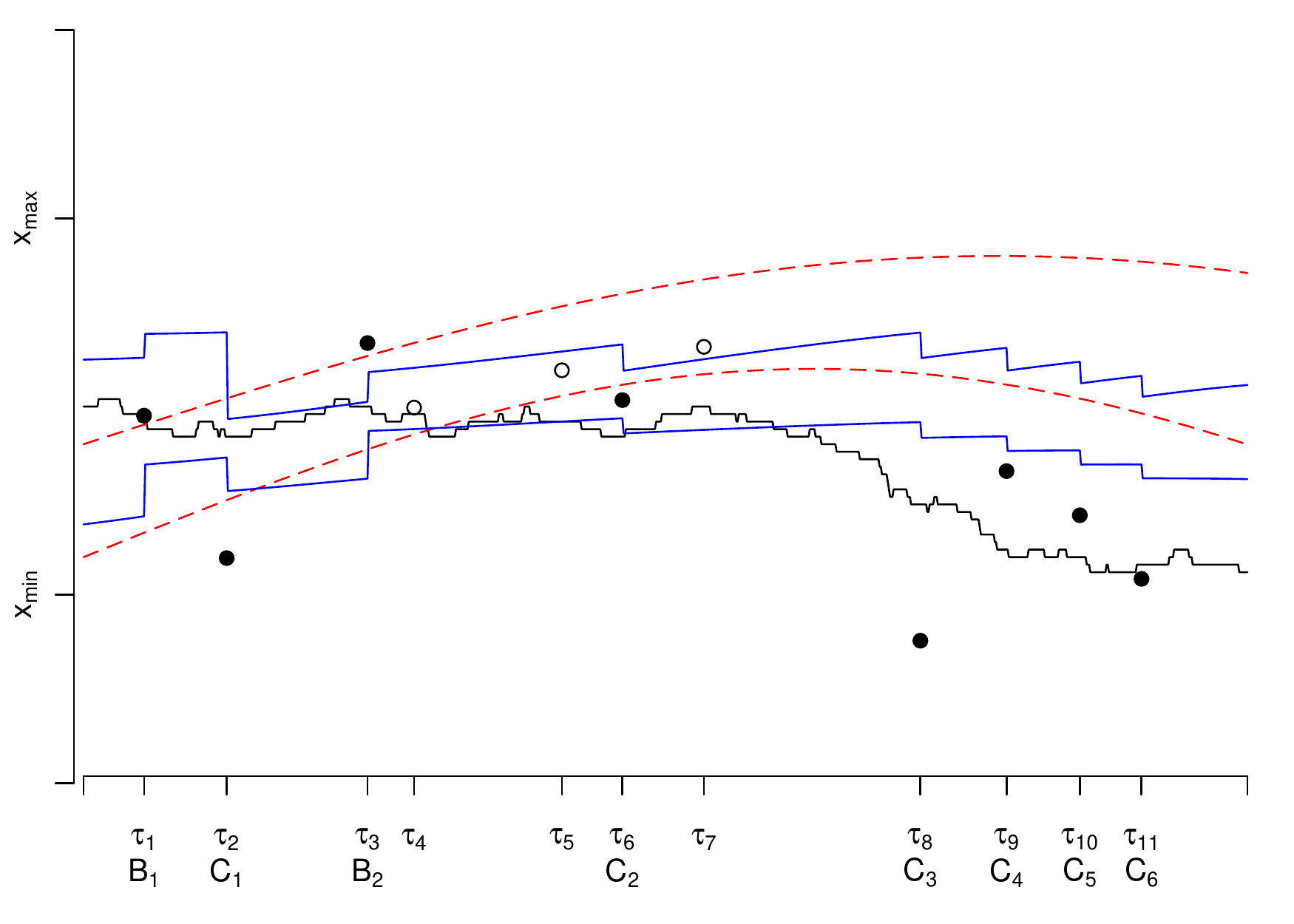}
\end{center}
\caption{We add $F(S)$ to Figure \ref{fig1} which are the fictitious Glosten-Milgrom-prices (i.e. zero-expected-profits) of the market maker if actually prices $S$ are quoted and the market reacts with buys and sells to them.}
\label{fig2}
\end{figure}

We now define a larger filtration  $\bbf= (\F_t)_{t\geq 0}$ by

\[\F_t:= \sigma \left(X_s, N_s, \eps_i 1_{\{\tau_i \leq s\}}, s\leq t, i\in \bbn \right) \vee \N, \]

which contains all information up to time $t$. The following lemma describes the intensity of the jump process $N^B$ that counts actual buys (for a given pricing strategy $S$) as described in Section \ref{model}.  

\begin{lemma} \label{rate}
 The $\bbf$-intensity of $N^B$ (in the sense of \cite{bremaud1981point} II, D7) is given by $\rate \Phi(\ask-X_-)$.
\end{lemma}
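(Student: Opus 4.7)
The plan is to realize $N^B$ as a thinning of the Poisson process $N$ and then read off its intensity via the compensator formula for marked point processes. Concretely, I would write
\[
N^B_t = \sum_{j\geq 1} 1_{\{\tau_j \leq t\}}\, 1_{\{X_{\tau_j}+\eps_j \geq \ask_{\tau_j}\}}.
\]
Because $X$ and $N$ are independent and $X$ is a c\`adl\`ag finite-state Markov process, its jump times form a set which, for each deterministic $t$, is $P$-null; conditioning on $N$ and applying this pointwise in $j$ yields $X_{\tau_j}=X_{\tau_j-}$ almost surely, so the indicator may be replaced by $1_{\{X_{\tau_j-}+\eps_j \geq \ask_{\tau_j}\}}$. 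Moreover $\ask$ is $\bbf^S$-predictable and hence $\bbf$-predictable, so $\ask_{\tau_j}\in\F_{\tau_j-}$, and $X_{s-}$ is $\bbf$-predictable by left continuity.

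Next I would invoke the standard compensator for the marked point process $\mu=\sum_j \delta_{(\tau_j,\eps_j)}$. Since $(\eps_j)$ are i.i.d.\ with law $Q(du):=-\Phi'(u)\,du$ (extended appropriately if $\eps$ can take $\pm\infty$), independent of $N$ and $X$, and $N$ is a Poisson process of rate $\rate$ independent of $X$, the $\bbf$-compensator of $\mu$ is
\[
\nu(ds,du) \;=\; \rate\, Q(du)\,ds.
\]
The only nontrivial point here is that $N$ remains a Poisson process of rate $\rate$ in the enlarged filtration $\bbf$; this follows because the extra information in $\bbf$, namely $X$ and the marks $\eps_i 1_{\{\tau_i\leq\cdot\}}$, is built from $X$ (independent of $N$) together with marks revealed only at jump times of $N$, so $N$ has independent increments relative to $\bbf$.

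Applying the compensator formula to the $\bbf$-predictable integrand $H(s,u,\omega)=1_{\{X_{s-}(\omega)+u \geq \ask_s(\omega)\}}$ then gives
\[
N^B_t \;-\; \int_0^t\!\!\int H(s,u)\,\nu(ds,du)
\;=\; N^B_t \;-\; \int_0^t \rate\,\Phi\!\bigl(\ask_s - X_{s-}\bigr)\,ds
\]
as a local $\bbf$-martingale. Since $N^B_t\leq N_t$ is $P$-integrable and the candidate compensator is bounded by $\rate t$, the local martingale is a true martingale, which is exactly the definition of $\rate\Phi(\ask-X_-)$ being the $\bbf$-intensity of $N^B$.

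The only real obstacle is the bookkeeping step of justifying $X_{\tau_j}=X_{\tau_j-}$ a.s.\ and that $N$ remains an $\bbf$-Poisson process after enlarging by $X$ and the marks; both are consequences of the assumed independence of $X$, $N$ and $(\eps_i)$, but they are what make the replacement $X\mapsto X_-$ and the use of the product-form compensator $\rate\,Q(du)\,ds$ legitimate in this endogenous-filtration setting.
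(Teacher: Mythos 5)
Your proposal is correct and is essentially the paper's argument in marked-point-process clothing: the paper verifies the intensity directly by testing against an arbitrary nonnegative $\bbf$-predictable $C$, conditioning on $\F_{\tau_i-}$ to replace $1_{\{X_{\tau_i}+\eps_i\geq \ask_{\tau_i}\}}$ by $\Phi(\ask_{\tau_i}-X_{\tau_i-})$ (using independence of $\eps_i$, predictability of $\ask$, and $X_{\tau_i}=X_{\tau_i-}$ a.s.) and then using that $N$ has $\bbf$-intensity $\rate$, which is exactly the content of your product-form compensator $\rate\,Q(du)\,ds$ applied to $H(s,u)=1_{\{X_{s-}+u\geq \ask_s\}}$. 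One cosmetic remark: you do not need $Q(du)=-\Phi'(u)\,du$ (the law of $\eps_1$ is only assumed to have a density on $[-C,C]$ and may charge $\pm\infty$); taking $Q$ simply as the law of $\eps_1$ suffices, since $\int 1_{\{X_{s-}+u\geq\ask_s\}}Q(du)=\Phi(\ask_s-X_{s-})$ in any case.
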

\begin{proof}
Let $C$ be a nonnegative $\bbf$-predictable process. Then 
\begin{align*}
E\left[\int_0^\infty C_s dN^B_s\right] &= E\left[\sum_{i=1}^{\infty} C_{\tau_i} 1_{\{X_{\tau_i}+\eps_i\geq \ask_{\tau_i}\}}\right]\\
&= \sum_{i=1}^{\infty} E\left[ E\left[C_{\tau_i} 1_{\{X_{\tau_i}+\eps_i\geq \ask_{\tau_i}\}}|\F_{\tau_i-}\right]\right]\\
&= \sum_{i=1}^{\infty} E\left[ C_{\tau_i} \Phi(\ask_{\tau_i}-X_{\tau_i-})\right]\\
&= E\left[\int_0^\infty C_s \Phi(\ask_s-X_{s-})dN_s\right]\\
&= E\left[\int_0^\infty C_s \rate \Phi(\ask_s-X_{s-})ds\right]
\end{align*}
where we use $X_{\tau_i}=X_{\tau_i-}$ $P-a.s.$ for the third equation.
\end{proof}

We now derive the filter equation of $\pi^S$.
\begin{lemma}\label{filtereq}
The process $\pi^S$ satisfies the following SDE 
\begin{equation*}
 \begin{aligned}
 d\pi_t^{S,i}  =& \pi_{t-}^{S,i} \left(\frac{\Phi(\ask_t-x_i)}{\sum_{j} \pi_{t-}^{S,j} \Phi(\ask_t-x_j)} -1\right) dN^B_t\\ 
& +\pi_{t-}^{S,i} \left(\frac{\Psi(\bid_t-x_i)}{\sum_{j} \pi_{t-}^{S,j} \Psi(\bid_t-x_j)} -1\right) dN^C_t \\
&- \left(\rate \pi_t^{S,i} \left(\vphantom{\sum_{j}}\Psi(\bid_t-x_i)+\Phi(\ask_t-x_i)\right. \right.\\
&\left.-\sum_{j} \pi_{t}^{S,j} \left(\Psi(\bid_t-x_j)+ \Phi(\ask_t-x_j)\right)\right)\\
&- \left.\sum_{j} \pi_t^{S,j} q(j,i)\right) dt.
\end{aligned}
\end{equation*}
for all $t\geq 0$, up to indistinguishability, with initial condition $\pi_0^{S,i}= P[X_0=x_i]$, where $\Psi(x) = P[\eps_1 \leq x]$.
\end{lemma}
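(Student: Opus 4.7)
The plan is to apply the innovation approach to nonlinear filtering for marked point-process observations, applied to the indicator vector $Y^i_t := 1_{\{X_t=x_i\}}$. Since $X$ is a time-homogeneous continuous-time Markov chain with $Q$-matrix $(q(i,j))$, standard theory yields the semimartingale decomposition $Y^i_t = Y^i_0 + \int_0^t \sum_j Y^j_{s-} q(j,i)\,ds + M^{X,i}_t$ with $M^{X,i}$ an $\bbf$-martingale. By Lemma~\ref{rate} the $\bbf$-intensity of $N^B$ is $\rate\,\Phi(\ask_t - X_{t-})$; the completely analogous computation (interchanging buys and sells, $\Phi$ and $\Psi$, $\ask$ and $\bid$) shows that the $\bbf$-intensity of $N^C$ is $\rate\,\Psi(\bid_t - X_{t-})$.

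Next I would identify the innovations. Because $S$ is $\bbf^S$-predictable and $\pi^{S,j}_{-}$ is the $\Ft^S$-predictable projection of $Y^j$, the predictable projection theorem gives the $\Ft^S$-intensities $\lam^B_t := \rate\sum_j \pi^{S,j}_{t-}\Phi(\ask_t-x_j)$ and $\lam^C_t := \rate\sum_j \pi^{S,j}_{t-}\Psi(\bid_t-x_j)$. Hence $M^B_t := N^B_t - \int_0^t \lam^B_s\,ds$ and $M^C_t := N^C_t - \int_0^t \lam^C_s\,ds$ are $\Ft^S$-martingales. The processes $N^B$ and $N^C$ a.s.\ do not jump simultaneously, since a simultaneous buy and sell at $\tau_i$ would force $X_{\tau_i}+\eps_i = \ask_{\tau_i}=\bid_{\tau_i}$, a null event because $\eps_1$ has a density on $[-C,C]$.

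For the jump kernels I would apply Bayes' rule at an arrival $\tau_i$: conditional on $\F^S_{\tau_i-}$ and on a buy occurring at $\tau_i$, the posterior mass on $\{X_{\tau_i}=x_i\}$ is proportional to the prior $\pi^{S,i}_{\tau_i-}$ times the likelihood $\Phi(\ask_{\tau_i}-x_i)$ of a buy in state $x_i$. Normalising gives
\begin{equation*}
\pi^{S,i}_{\tau_i}-\pi^{S,i}_{\tau_i-} = \pi^{S,i}_{\tau_i-}\Bigl(\tfrac{\Phi(\ask_{\tau_i}-x_i)}{\sum_j \pi^{S,j}_{\tau_i-}\Phi(\ask_{\tau_i}-x_j)}-1\Bigr),
\end{equation*}
with the analogous formula at jumps of $N^C$, using $\Psi$ and $\bid$. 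Combining the Markov-chain drift with these jump kernels and the two innovation martingales, the standard point-process filter equation (e.g.\ Br\'emaud~\cite{bremaud1981point} Ch.~IV~T14, or Bain and Crisan~\cite{bain2008fundamentals}) yields
\begin{equation*}
d\pi^{S,i}_t = \sum_j \pi^{S,j}_t q(j,i)\,dt + K^{B,i}_t\,dM^B_t + K^{C,i}_t\,dM^C_t,
\end{equation*}
where $K^{B,i}_t, K^{C,i}_t$ are the jump amplitudes just derived.

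Finally, expanding $dM^B_t = dN^B_t - \lam^B_t\,dt$ and likewise for $M^C$ and collecting the $dt$-terms reproduces exactly the drift in the statement: the Markov term $\sum_j \pi^{S,j}_t q(j,i)$ plus the two compensator corrections $-\rate\,\pi^{S,i}_t\bigl(\Phi(\ask_t-x_i)-\sum_j \pi^{S,j}_t\Phi(\ask_t-x_j)\bigr)$ and $-\rate\,\pi^{S,i}_t\bigl(\Psi(\bid_t-x_i)-\sum_j \pi^{S,j}_t\Psi(\bid_t-x_j)\bigr)$. Uniqueness up to indistinguishability follows from uniqueness of $\pi^S$ as the c\`adl\`ag optional projection supplied by the preceding lemma. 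The main obstacle is the rigorous identification of the $\Ft^S$-intensities of $N^B$ and $N^C$: this demands care with the predictable projection theorem, and exploits crucially that admissibility forces $S$ to be $\bbf^S$-predictable rather than merely adapted, so that $\Phi(\ask_t-X_{t-})$ and $\Psi(\bid_t-X_{t-})$ project onto the stated expressions involving $\pi^{S,j}_{t-}$.
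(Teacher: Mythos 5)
Your proposal is correct and follows essentially the same route as the paper: invoke the point-process filtering theorem of Br\'emaud (the paper uses IV, T2), use Lemma \ref{rate} to get the $\bbf$-intensities $\rate\Phi(\ask-X_-)$ and $\rate\Psi(\bid-X_-)$, project them onto $\widetilde{\bbf}^S$ to obtain $\rate\sum_j\pi^{S,j}_{-}\Phi(\ask-x_j)$ and its sell-side analogue, identify the jump gains by the Bayes-type formula (which the paper verifies as the solution of its defining relation (\ref{remtsh})), and collect the $dt$-terms. The only difference is presentational: you phrase it via innovation martingales and then expand, while the paper reads the gains and drift directly from the cited theorem.
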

 
\begin{proof}
We can derive the filter equation for  $\pi^S$ as it is done in \cite{bremaud1981point} IV, T2. The filter equation has the form

\[ \pi^S_t = \pi^S_0 + \int_0^t K^B_s  dN^B_s+ \int_0^t K^C_s dN^C_s +\int_0^t\left(- K^B_s \widehat{\lam}^B - K^C_s \widehat{\lam}^C +f_s\right) ds,\] 

where $\widehat{\lam}^B$ and $\widehat{\lam}^C$ are the $\widetilde{\bbf}^S$-intensities of $N^B$ and $N^C$ respectively, $f$ is the $\widetilde{\bbf}^S$-compensator of $X$, which is given by $\sum_{j} \pi^{S,j} q(j,\cdot)$ here. $K^B_s$ is described by 
the theorem as $\Psi^B_s- \pi^S_{s-}$ and $K^C_s=\Psi^C_s- \pi^S_{s-}$ respectively, where $\Psi^B_s$ is the unique (up to a 
$(P \otimes \lam)$-null set) $\widetilde{\bbf}^S$-predictable process satisfying
\begin{equation}\label{remtsh}
E\left[\int_0^t C_s 1_{\{X_s=x_i \}}\lam^B_s ds\right]= E\left[\int_0^t C_s \Psi^{B,i}_s \widehat{\lam}^B_s ds\right]
\end{equation} 
for all $\widetilde{\bbf}^S$-predictable nonnegative bounded processes $C$, $i=1,\ldots,n$ and all $t \geq 0$ where $\lam^B, \widehat{\lam}^B$ are the $\bbf,\widetilde{\bbf}^S$-intensities of $N^B$ respectively. A similar equation holds for $\Psi^C_s$. 

>From Lemma \ref{rate} we have that $\lam^B= \rate \Phi(\ask-X_-)$ and 

\begin{align*}
 E\left[\lam^B|\F_s^S\right]&= \sum_{i=1}^n E\left[1_{\{X_{s-}=x_i \}}\rate \Phi(\ask-x_i) |\F_s^S\right]\\
&= \sum_{i=1}^n \rate \Phi(\ask-x_i)E\left[1_{\{X_{s-}=x_i \}} |\F_s^S\right] \mbox{ $P$-a.s.}. 
\end{align*}

Since, for fixed $s\in \bbr_+$, $X_s=X_{s-}$ $P$-a.s., $\rate \sum_{i=1}^n \pi^{S,i}_{s-} \Phi(\ask_s-x_i)$ is a version of $\widehat{\lam}^B_s$. From this and as $\pi^{S,i}_-$ is the $\widetilde{\bbf}^S$-compensator of $1_{\{X=x_i\}}$ it follows that 

\[\Psi^{B,i}_s = \frac{ \pi^{S,i}_{s-}\Phi(\ask_s-x_i)}{\sum_{j=1}^n \pi^{S,j}_{s-} \Phi(\ask_s-x_j)}\]

solves (\ref{remtsh}) which gives us $K^B_s$ and analog $K^C_s$ as stated in the lemma.

For the buy-side part of the $dt$-term we get

\[- K^{B,i}_s \widehat{\lam}^B= -\left(\frac{ \pi^{S,i}_{s-}\Phi(\ask_s-x_i)}{\sum_{j=1}^n \pi^{S,j}_{s-} \Phi(\ask_s-x_j)}-\pi^{S,i}_s\right) \rate \sum_{j=1}^n \pi^{S,j}_{s-} \Phi(\ask_s-x_j)\]

which simplifies to

\[-\rate\pi^{S,i}_{s-}\Phi(\ask_s-x_i)+\rate\pi^{S,i}_{s-} \sum_{j=1}^n \pi^{S,j}_{s-} \Phi(\ask_s-x_j).\]

Together with $f$ and the similar results for the sell-side we obtain the $dt$-term as stated in the Lemma which completes the proof.
\end{proof}

We now can prove the following Lemma.

\begin{lemma}\label{fatb} We have
\[\overline{F(S)}_{B_i} = E[X_{B_i}| \F^S_{B_i}] \qquad \mbox{and} \qquad\underline{F(S)}_{C_i} = E[X_{C_i}| \F^S_{C_i}]\qquad \mbox{$P$-a.s.}\]
for all $i\in \bbn$. 
\end{lemma}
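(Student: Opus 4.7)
The plan is to read off the identity directly from the filter equation in Lemma \ref{filtereq}, evaluated at the buy time $B_i$. Recall from (\ref{filter}) that $\pi^{S,i}_{B_i} = P[X_{B_i}=x_i\mid \F^S_{B_i}]$ $P$-a.s., so once I know the jump of $\pi^S$ at $B_i$, the conditional expectation $E[X_{B_i}\mid \F^S_{B_i}] = \sum_i x_i \pi^{S,i}_{B_i}$ follows.

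First I would isolate the jump at $t=B_i$ in Lemma \ref{filtereq}. Since $N^B$ and $N^C$ are built from disjoint subsets of the Poisson times $(\tau_j)$, they a.s.\ have no common jumps, so $\Delta N^C_{B_i}=0$ on $\{B_i<\infty\}$; and since the $dt$-integral is absolutely continuous it contributes $0$ to $\Delta\pi^S_{B_i}$. Hence from Lemma \ref{filtereq},
\begin{equation*}
\Delta \pi^{S,i}_{B_i} \;=\; \pi^{S,i}_{B_i-}\left(\frac{\Phi(\ask_{B_i}-x_i)}{\sum_{j}\pi^{S,j}_{B_i-}\Phi(\ask_{B_i}-x_j)}-1\right),
\end{equation*}
so that
\begin{equation*}
\pi^{S,i}_{B_i} \;=\; \frac{\pi^{S,i}_{B_i-}\Phi(\ask_{B_i}-x_i)}{\sum_{j}\pi^{S,j}_{B_i-}\Phi(\ask_{B_i}-x_j)}.
\end{equation*}

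Second, multiplying by $x_i$, summing over $i$, and recognising the right-hand side as $g(\ask_{B_i},\pi^S_{B_i-})$ from Definition \ref{defgh} (here I use that $\ask$ is $\bbf^S$-predictable, so $\ask_{B_i}$ is $\F^S_{B_i-}$-measurable, while $\pi^S_{B_i-}$ is $\F^S_{B_i-}$-measurable by left-continuity), I obtain
\begin{equation*}
\sum_{i} x_i \pi^{S,i}_{B_i} \;=\; g\bigl(\ask_{B_i},\pi^S_{B_i-}\bigr) \;=\; \overline{F(S)}_{B_i}.
\end{equation*}
Combining with $\sum_i x_i \pi^{S,i}_{B_i} = E[X_{B_i}\mid \F^S_{B_i}]$ (which follows from (\ref{filter}) applied at the finite $\bbf^S$-stopping time $B_i$, restricted to $\{B_i<\infty\}$; on $\{B_i=\infty\}$ both sides of the claim are vacuous) gives the first identity. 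The sell-side statement follows by the entirely symmetric argument using $\Psi$ in place of $\Phi$ and the $dN^C$ term of Lemma \ref{filtereq}.

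The only potentially delicate point is the clean separation of jumps of $N^B$ and $N^C$ and the measurability of $\ask_{B_i}$ with respect to $\F^S_{B_i-}$; both are standard but need to be invoked explicitly since $\pi^S$ and $S$ are coupled through the endogenous filtration. Beyond that, the statement is really a one-line consequence of the filter equation, so I do not expect a substantive obstacle.
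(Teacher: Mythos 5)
Your proposal is correct and follows essentially the same route as the paper: both read off the jump of $\pi^S$ at $B_i$ from the filter equation (using that $N^B$ and $N^C$ have no common jumps), identify $\sum_i x_i \pi^{S,i}_{B_i}$ with $g(\ask_{B_i},\pi^S_{B_i-})=\overline{F(S)}_{B_i}$, and invoke the characterisation of $\pi^S$ at stopping times. The extra remarks on the vanishing $dt$-contribution and the $\{B_i=\infty\}$ case are fine but not needed beyond what the paper states.
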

\begin{proof}

By the filter equation in Lemma \ref{filtereq} and the fact that $N^B$ and $N^C$ have no common jumps it follows that 

\begin{equation}\label{pieqpsi}
\pi^{S,k}_{B_i}=\frac{\pi^{S,k}_{B_i-} \Phi(\ask_{B_i}-x_k)}{\sum_{j=1}^n \pi^{S,j}_{B_i-} \Phi(\ask_{B_i}-x_j)} , \qquad \mbox{for } k =1, \ldots,n.
\end{equation}

By definition of $F$ and $g$ and by (\ref{pieqpsi}) we have
\begin{align*}
\overline{F(S)}_{B_i} &= g\left(\ask_{B_i}, \pi^S_{B_i-}\right)= \frac{\sum_{j=1}^n x_j \pi^{S,j}_{B_i-} \Phi(\ask_{B_i}-x_j)}{\sum_{j=1}^n \pi^{S,j}_{B_i-} \Phi(\ask_{B_i}-x_j)}\\
&= \sum_{j=1}^n x_j \pi^{S,j}_{B_i} =E[X_{B_i}| \F^S_{B_i}].
\end{align*}
The same holds true at the times when a sell occurs.
\end{proof}

The importance of that assertion becomes clear if we put it together with the next lemma that describes an equivalent and maybe more intuitive description of the criterion of the GMPS in Definition \ref{admis}.

\begin{lemma}\label{equidef}
$S$ is a GMPS iff it is admissible and 
\[\ask_{B_i}=E[X_{B_i}|\F^S_{B_i}] \qquad \mbox{and} \qquad \bid_{C_i} = E[X_{C_i}|\F^S_{C_i}]\qquad P-a.s.\]
for all $i\in \bbn$.
\end{lemma}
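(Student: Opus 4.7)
I would prove each implication separately, observing that the bid-side argument mirrors the ask-side so I concentrate on the latter.

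\textbf{Easy direction $(\Leftarrow)$.} Assume the pointwise identities. Given a bounded $\bbf^S$-stopping time $\tau$, I would use the standard facts $\{B_i\le\tau\}\in\F^S_{B_i}$ (since $B_i\wedge\tau$ is a stopping time) and $\ask_{B_i}$ is $\F^S_{B_i-}\subseteq\F^S_{B_i}$-measurable by $\bbf^S$-predictability of $\ask$, so conditioning on $\F^S_{B_i}$ inside each summand lets $X_{B_i}$ be replaced by $E[X_{B_i}\mid\F^S_{B_i}]=\ask_{B_i}$, making each term vanish. Summation over $i$ (only finitely many summands non-zero, each bounded by $x_{\max}-x_{\min}$) yields the buy-side GMPS identity.

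\textbf{Hard direction $(\Rightarrow)$.} Now the interesting direction. By Lemma~\ref{fatb}, $E[X_{B_i}\mid\F^S_{B_i}]=\overline{F(S)}_{B_i}$, so the goal reduces to $\ask_{B_i}=\overline{F(S)}_{B_i}$ a.s. I would introduce the bounded, $\bbf^S$-predictable process $D_t:=\ask_t-\overline{F(S)}_t$. Running the tower-property manipulation of the easy direction in reverse turns the GMPS identity into $E\bigl[\int_0^\tau D_s\,dN^B_s\bigr]=0$ for every bounded $\bbf^S$-stopping time $\tau$. Predictability of $D$ and the $\bbf^S$-intensity $\widehat\lambda^B_s=\lambda\sum_j\pi^{S,j}_{s-}\Phi(\ask_s-x_j)$ of $N^B$ (established in the proof of Lemma~\ref{filtereq}) then give $E\bigl[\int_0^\tau D_s\widehat\lambda^B_s\,ds\bigr]=0$. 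Plugging in stopping times of the form $t\mathbf 1_A+s\mathbf 1_{A^c}$ with $A\in\F^S_s$, $s\le t$, and subtracting the $\tau=s$ case identifies the continuous finite-variation process $t\mapsto\int_0^t D_u\widehat\lambda^B_u\,du$ as an $\bbf^S$-martingale, hence identically zero. Since $\widehat\lambda^B_s\ge\lambda\Phi(C)>0$ by Lemma~\ref{allwbuy}, I conclude $D=0$ $(P\otimes\lambda)$-a.e.

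The main obstacle I anticipate is the last step: upgrading ``$D=0$ Lebesgue-a.e.'' to ``$D_{B_i}=0$ a.s.\ for every $i$'', since the $B_i$ are random times which could in principle live on a Lebesgue-null set. I would resolve this by one further intensity identity applied to the $\bbf^S$-predictable indicator $\mathbf 1_{\{D\neq 0\}}$,
\[
E\Big[\sum_{i\ge 1}\mathbf 1_{\{D_{B_i}\neq 0\}}\Big]=E\Big[\int_0^\infty\mathbf 1_{\{D_u\neq 0\}}\,dN^B_u\Big]=E\Big[\int_0^\infty\mathbf 1_{\{D_u\neq 0\}}\widehat\lambda^B_u\,du\Big]=0,
\]
whose vanishing is immediate from $D=0$ a.e., forcing $D_{B_i}=0$ a.s.\ simultaneously for all $i$. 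This closes the ask-side, and the bid-side is handled symmetrically with $\bid$, $C_i$, $\underline{F(S)}$, $h$, $N^C$ in place of the buy-side objects.
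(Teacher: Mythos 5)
Your proof is correct, but the hard direction takes a genuinely different route from the paper's. The paper argues at the level of $\sigma$-algebras: it shows directly that sets of the form $\{B_{i-1}\le t<B_i\}\cap A$, $A\in\F^S_t$, generate $\F^S_{B_i-}$, deduces $\ask_{B_i}=E[X_{B_i}\,|\,\F^S_{B_i-}]$ from predictability of $\ask$, and then proves the (non-obvious and independently interesting) identity $\F^S_{B_i-}=\F^S_{B_i}$ via Br\'emaud's theorem on marked point processes, using that buys and sells never coincide. You instead invoke Lemma~\ref{fatb} to rewrite $E[X_{B_i}\,|\,\F^S_{B_i}]$ as the predictable process $\overline{F(S)}$ evaluated at $B_i$, reduce the zero-profit condition to $E[\int_0^\tau D_s\,dN^B_s]=0$ for $D=\ask-\overline{F(S)}$, pass to the compensator, kill the resulting continuous finite-variation martingale, and transfer $D=0$ $(P\otimes\lam)$-a.e.\ back to the jump times by one more intensity identity --- that last transfer step is exactly the right fix for the obstacle you identified, and as a by-product your intermediate conclusion $D=0$ a.e.\ is already the ask-side fixed-point property of Theorem~\ref{fixsol}, so you have effectively merged Lemma~\ref{equidef} with half of that theorem. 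The trade-offs: the paper's argument is purely measure-theoretic, uses none of the hypotheses of Theorem~\ref{main}, and yields the extra structural fact $\F^S_{B_i-}=\F^S_{B_i}$; yours leans on the filtering machinery (Lemma~\ref{fatb}, the $\widetilde{\bbf}^S$-intensity from Lemma~\ref{filtereq}, and the lower bound $\widehat\lambda^B\ge\lam\Phi(C)>0$ from Lemma~\ref{allwbuy}), so it is only valid under the standing assumptions of Theorem~\ref{main} --- harmless here since the lemma is only ever applied under them, but it does make the statement less general. Two small technical points to note in a write-up: $\overline{F(S)}$ is only $\widetilde{\bbf}^S$-predictable (completed filtration), so the intensity identities should be taken with respect to $\widetilde{\bbf}^S$ (completion does not change intensities); and the intensity formula is stated for nonnegative predictable integrands, so split $D=D^+-D^-$ before applying it.
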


This means that all trades in a GMPS are executed at a price which is the expectation of $X$ given the information available to market makers at that point of time. The interesting point about this characterization of GMPS is that a trade which occurs at that very moment is included in the filtration but its occurrence and especially its direction is not predictable in contrast to $S$.

\begin{remark}
 In view of Lemma \ref{fatb} and Lemma \ref{equidef}, we can give an intuitive description of $F$. $F(S)$ are the Glosten-Milgrom-prices (i.e. zero-expected-profits) a market maker would have in mind if actually the prices $S$ are quoted and the market reacts with buys and sells to them (which leads to the filtration $\bbf^S$).
\end{remark}

\begin{proof}[Proof of Lemma \ref{equidef}]
We only consider buys. Let $S$ be a GMPS. For fixed $i \in \bbn$ and $t \in \bbr_+$ we consider 
\[C:=\{B_{i-1} \leq t < B_i\}\cap A\]
for $A\in \F^S_t$. For $n \in \bbn, n>t$, let $\kappa_n(\omega) := t$ if $\omega \notin C$ and $\kappa_n(\omega) := B_i(\omega)\wedge n$ if $\omega \in C$. Hence $t\leq \kappa_n$ for all $n$ and both are bounded $\bbf^S$-stopping times. Thus, we have
\begin{align*}
0 &= E\left[\sum_{B_j \leq \kappa_n} (\ask_{B_j}-X_{B_j})\right] - E\left[\sum_{B_j \leq t} (\ask_{B_j}-X_{B_j})\right]\\
&= E\left[\sum_{t<B_j \leq \kappa_n} (\ask_{B_j}-X_{B_j})1_C\right] = E\left[(\ask_{B_i}-X_{B_i})1_{C\cap\{B_i \leq n\}}\right]
\end{align*}
for all $n$ and therefore 

\[ E\left[ (\ask_{B_i}-X_{B_i}) 1_C\right] = 0.\]

Note that  $\F_{B_i-}^S$ is generated by sets of the form $\{t < B_i\}\cap A$, where $A\in \F^S_t$ and $t \in \bbr_+$which can be writen as

\[\{t < B_i\}\cap A = \cup_{t \leq t_n \in \bbq} \{B_{i-1} \leq t_n < B_i\}\cap A.\]

Hence, as $A \in \F_t \subset \F_{t_n}$ for $t_n \geq t$  and as $\{B_{i-1}\leq t\}\in \F^S_t$, the sets of the form like $C$ generate $\F_{B_i-}^S$. Since in addition $\ask$ is predictable it follows that 

\begin{equation}\label{seqexbifm}
\ask_{B_i} =E\left[ X_{B_i}| \F_{B_i-}^S\right] \qquad  \mbox{ $P$-a.s.}.
\end{equation}

Let us show that $\F_{B_i-}^S=\F_{B_i}^S$. We consider the following marked point process $(T_n, Z_n)_{n \in \bbn}$, where 

\[T_n := \inf\left\{t \ |\ \sum_{i\geq1} 1_{\{B_i\leq t\}}+ \sum_{i\geq1} 1_{\{C_i\leq t\}}\geq n\right\},\] 

which are the times of trades (buys and sells) and $Z_n :=1$ if $T_n = B_i$ for some $i$ and $Z_n:=-1$ if $T_n = C_i$. Note that the proof only works as buys and sells never happen simultaneously.

We can now write 

\begin{equation}\label{fsbi}
 \F_{B_i}^S = \{A \ |\ A = \cup_{n\in \bbn} A_n \cap\{B_i=T_n\}\mbox{ for } A_n \in \F_{T_n}^S \mbox{ for all } n\}
\end{equation}

and
\begin{equation}\label{fsbim}
\F_{B_i-}^S = \{A \ |\ A = \cup_{n\in \bbn} A_n \cap\{B_i=T_n\}\mbox{ for } A_n \in \F_{T_n-}^S \mbox{ for all } n\}.
\end{equation}

The first equation can be seen easily. For ''$\subset$`` of the second it suffices to show that sets of the form $A\cap \{t < B_i\}$, $A \in \F^S_t$ are in the set on the RHS. This can be done by choosing $A_n = A \cap \{t < T_n\}$. For ''$\supset$`` it again suffices to consider sets of the form $A_n = \widetilde{A}_n \cap \{t < T_n\}$, $\widetilde{A}_n \in \F^S_t$. It then remains to show that $\{B_i=T_n\} \in \F_{B_i-}^S$. However, 

\[\{T_n< B_i\} = \cup_{q \in \bbq} \{T_n< q\} \cap \{q < B_i\} \in \F_{B_i-}^S \qquad \mbox{for all $n \in \bbn$ and thus}\]

\[\{B_i=T_n\} = \{T_{n-1}< B_i\} \cap \{T_n< B_i\}^c\in \F_{B_i-}^S \qquad \mbox{for all $n \in \bbn$}.\]

Now, by Theorem III, T2 in \cite{bremaud1981point} applied to the marked point process $(T_n, Z_n)_{n \in \bbn}$ any $A_n \in \F^S_{T_n}$ can be written as 

\[A_n = (M_1 \cap \{Z_n=1\}) \cup (M_2 \cap \{Z_n=-1\})\]

for $M_1,M_2 \in \F_{T_n-}^S$. Since for fixed $i$ $\{B_i=T_n\}=\{Z_n=1\}$, we have 

\[A_n \cap\{B_i=T_n\} = M_1 \cap \{B_i=T_n\} \]
 and  $\F_{B_i-}^S=\F_{B_i}^S$ follows from (\ref{fsbi}) and (\ref{fsbim}). Together with (\ref{seqexbifm}) one direction of the lemma is proven.

Now let $\ask_{B_i}=E[X_{B_i}|\F^S_{B_i}]$ for all $i\in \bbn$ and $\tau$ be a bounded $\bbf^S$-stopping time. We obtain that
\begin{align*}
E\left[\sum_{B_i \leq \tau} (\ask_{B_i}-X_{B_i})\right]&= \sum_{i=1}^\infty E\left[1_{\{B_i \leq \tau\}} (\ask_{B_i}-X_{B_i})\right]\\
&= \sum_{i=1}^\infty E\left[E\left[1_{\{B_i \leq \tau\}} (\ask_{B_i}-X_{B_i})|\F^S_{B_i}\right]\right]\\
&= \sum_{i=1}^\infty E\left[1_{\{B_i \leq \tau\}} (\ask_{B_i}-E\left[X_{B_i}|\F^S_{B_i}\right])\right]\\
&= 0.\qedhere
\end{align*}
\end{proof}

\begin{definition}
 We say that an admissible strategy $S$ is a fixed point of $F$, if $S= F(S)$ $(P\otimes \lam)$-a.e. (where $\lam$ denotes the Lebesgue measure on $\bbr_+$).
\end{definition}

\begin{theorem} \label{fixsol}
An admissible strategy $S$ is a solution of the GMPS-problem iff $S$ is a fixed point of $F$.
\end{theorem}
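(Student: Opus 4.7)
My plan is to rewrite both sides of the claimed equivalence as pointwise conditions and then bridge them via the compensator of the trade-counting processes. Combining Lemma \ref{equidef} and Lemma \ref{fatb} immediately gives that an admissible $S$ is a GMPS iff
\[\ask_{B_i}=\overline{F(S)}_{B_i}\quad\text{and}\quad \bid_{C_i}=\underline{F(S)}_{C_i}\qquad P\text{-a.s. for every }i\in\bbn.\]
On the other hand, writing $Y^+_s:=1_{\{\ask_s\neq \overline{F(S)}_s\}}$ and $Y^-_s:=1_{\{\bid_s\neq \underline{F(S)}_s\}}$ (both nonnegative, bounded, and $\widetilde{\bbf}^S$-predictable), the fixed-point condition $S=F(S)$ $(P\otimes\lam)$-a.e.\ is equivalent to $Y^+=Y^-=0$ $(P\otimes\lam)$-a.e.

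To connect the two reformulations I would invoke the $\widetilde{\bbf}^S$-intensity formula established in the proof of Lemma \ref{filtereq}: for every $T>0$,
\[E\Bigl[\sum_{B_i\leq T}Y^+_{B_i}\Bigr]=E\!\left[\int_0^T Y^+_s\,\hat\lam^B_s\,ds\right],\qquad \hat\lam^B_s=\lam\sum_{j=1}^n\pi^{S,j}_{s-}\Phi(\ask_s-x_j),\]
together with the analogous identity for $Y^-$ using $\hat\lam^C_s=\lam\sum_{j=1}^n\pi^{S,j}_{s-}\Psi(\bid_s-x_j)$.

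Because $\ask_s-x_j,\bid_s-x_j\in[-C,C]$, Lemma \ref{allwbuy} and its bid-side counterpart (cf.\ the Remark after Lemma \ref{lippi}) supply the two-sided positive bounds $0<\lam\Phi(C)\leq \hat\lam^B_s\leq \lam$ and $0<\lam\Psi(-C)\leq \hat\lam^C_s\leq \lam$. Since $Y^+\geq 0$, the right-hand side of the previous identity vanishes for all $T>0$ iff $Y^+=0$ $(P\otimes\lam)$-a.e., while the left-hand side vanishes for all $T>0$ iff $Y^+_{B_i}=0$ a.s.\ for every $i$. The same chain of equivalences applies to $Y^-$ with $\hat\lam^C$ and $N^C$, and combining both sides yields the desired equivalence.

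The step that I expect to require the most care is the strict positivity of the intensities; without it, a GMPS could in principle differ from its image $F(S)$ on $(\omega,t)$-sets where no trade ever occurs, and the implication from trade-time identities to $(P\otimes\lam)$-a.e.\ equality would fail. This strict positivity is exactly what the assumption $K<1$ in Theorem \ref{main} delivers through Lemma \ref{allwbuy} and its sell-side analogue, so the argument essentially hinges on having already secured those two positivity facts before invoking the compensator identity.
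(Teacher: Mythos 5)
Your argument is correct and follows essentially the same route as the paper: both directions reduce, via Lemmas \ref{fatb} and \ref{equidef}, to showing that the predictable set $\{S\neq F(S)\}$ is $(P\otimes\lam)$-null iff it is missed by all trade times, and everything hinges on the positivity $\Phi(C)>0$ (resp.\ its bid-side analogue). The only cosmetic difference is that you run both implications through the two-sided bounds $\lam\Phi(C)\leq \hat\lam^B_s\leq\lam$ on the $\widetilde{\bbf}^S$-intensity of $N^B$ (and likewise for $N^C$), whereas the paper uses the Poisson compensator identity for one direction and, for the other, a direct conditioning on $\F_{\tau_j-}$ exploiting the independence of $\eps_j$.
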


\begin{proof}
Identity of predictable sets $B$ up to a $(P\otimes \lam)$-null set is equivalent to equality at Poisson times, since
\[(P\otimes\lam) (B) = \sum_{j=1}^\infty P[(\omega,\tau_j(\omega))\in B]\]

Let $S$ be a fixed point of $F$. With Lemma \ref{fatb} we obtain
\[\ask_{B_i} = \overline{F(S)}_{B_i} = E[X_{B_i}| \F^S_{B_i}]\mbox{ and } \bid_{C_i} = \underline{F(S)}_{C_i} = E[X_{C_i}| \F^S_{C_i}]\qquad \mbox{$P$-a.s.}\]
for all $i\in \bbn$. Lemma \ref{equidef} now yields that $S$ is a GMPS. 

Now let $S$ be a GMPS and $j\in \bbn$ be fixed. We consider the set 

\[A:=\left\{ \ask_{\tau_j} \neq \overline{F(S)}_{\tau_j}\right\}\]

which is in $\F_{\tau_j -}$ since $S$ and $F(S)$ are $\bbf^S$-predictable and hence $\bbf$-predictable.

Since $S$ is a GMPS and $\eps_j$ is independent of $\F_{\tau_j -}$  we have by Lemma \ref{equidef}

\begin{align*}
0&= P\left[\ask_{B_i}\neq \overline{F(S)}_{B_i} \mbox{ for some } i\in\bbn\right]\\
 &\geq P[\{\tau_j=B_i \mbox{ for some } i \in \bbn \}\cap A] = P[\{X_{\tau_j} + \eps_j \geq \ask_{\tau_j}\} \cap A]\\
 &\geq  P[\{ \eps_j \geq x_{\max}-x_{\min}\} \cap A] = P[A]P[\eps_j\geq  x_{\max}-x_{\min}].
\end{align*}

Since by Lemma \ref{allwbuy} $P[\eps_j\geq  x_{\max}-x_{\min}]>0$ it follows that $P[A]=0$, in other words, for all $\tau_j$ we have

\[\ask_{\tau_j}  =\overline{F(S)}_{\tau_j} \qquad \mbox{$P$-a.s.}. \qedhere\]

\end{proof}

\subsection{Uniqueness}
We first show uniqueness of the solution by proving that $F$ is a contraction in the sense of Lemma \ref{contr}. Let $S$ and $T$ be admissible pricing strategies. 

\begin{definition}
For given pricing strategies $S,T$ let 
\[A^1_s:= \{ X_{\tau_i}+\eps_i \notin [\min\{\overline{S}_{\tau_i}, \overline{T}_{\tau_i}\},\max\{\overline{S}_{\tau_i}, \overline{T}_{\tau_i}\})\mbox{ for all } \tau_i \leq s\}. \]

\end{definition}

$A^1_s$ is the event that until $s$ no buy occurred only in one of the two pricing strategies $S$ and $T$. 
\begin{lemma}\label{lema} We have that
\[P\left[ (A^1_s)^c\right] \leq \rate M E \left[\int_0^s  |\overline{S}_u-\overline{T}_u|du\right]\]
where $M:=\max\{\Phi'(x)|x\in [-C,C]\}$.
\end{lemma}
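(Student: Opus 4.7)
The plan is to bound $P[(A_s^1)^c]$ via a first-moment calculation on a suitable counting process. Specifically, I would introduce
\[
M_s := \sum_{i\ge 1} 1_{\{\tau_i\le s\}}\,1_{\{X_{\tau_i}+\eps_i\in[\min\{\ask_{\tau_i},\overline{T}_{\tau_i}\},\max\{\ask_{\tau_i},\overline{T}_{\tau_i}\})\}},
\]
which counts exactly the Poisson arrivals on $[0,s]$ whose noisy valuations fall strictly between the two ask quotes. Then $(A^1_s)^c=\{M_s\ge 1\}$, and Markov's inequality yields $P[(A^1_s)^c]\le E[M_s]$. So the whole task reduces to estimating $E[M_s]$.

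For the estimate of $E[M_s]$, I would essentially repeat the computation in Lemma~\ref{rate}, but applied to the indicator of falling into the spread window between the two ask prices instead of above a single ask price. Conditioning on $\F_{\tau_i-}$ and using that $\eps_i$ is independent of this $\sigma$-algebra (with $X_{\tau_i}=X_{\tau_i-}$ $P$-a.s.), the inner conditional probability is exactly
\[
\bigl|\Phi(\ask_{\tau_i}-X_{\tau_i-})-\Phi(\overline{T}_{\tau_i}-X_{\tau_i-})\bigr|,
\]
since $\Phi$ is monotone and $\Phi(y)=P[\eps_1\ge y]$. Summing over $i$ and rewriting the sum over Poisson jump times as an integral against $dN_s$, whose compensator is $\rate\,ds$, gives
\[
E[M_s]=\rate\,E\!\left[\int_0^s \bigl|\Phi(\ask_u-X_{u-})-\Phi(\overline{T}_u-X_{u-})\bigr|\,du\right].
\]

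Finally, by the mean value theorem applied to $\Phi$ on $[-C,C]$ (both arguments lie in this interval because $\ask_u,\overline{T}_u,X_{u-}\in[x_{\min},x_{\max}]$ by admissibility),
\[
\bigl|\Phi(\ask_u-X_{u-})-\Phi(\overline{T}_u-X_{u-})\bigr|\le M\,|\ask_u-\overline{T}_u|,
\]
with $M=\max\{|\Phi'(x)|:x\in[-C,C]\}=\max\{\Phi'(x):x\in[-C,C]\}$ (noting $\Phi'\le 0$ so $|\Phi'|=-\Phi'$, but the assumptions guarantee $|\Phi'|$ is well-defined and finite on the compact interval). Combining this with the previous display gives the claimed bound. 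The only mildly subtle point is the measurability justification for writing the sum as an integral against the Poisson process and then taking the compensator, but this is exactly the step carried out in Lemma~\ref{rate}, so I would just reference that argument rather than redo it.
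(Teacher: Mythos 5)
Your proposal is correct and follows essentially the same route as the paper: the paper introduces the identical counting process (called $Y$ there), computes its $\bbf$-intensity $\rate\bigl(\Phi(\min\{\overline{S}_t,\overline{T}_t\}-X_{t-})-\Phi(\max\{\overline{S}_t,\overline{T}_t\}-X_{t-})\bigr)$ by the argument of Lemma~\ref{rate}, bounds it by $\rate M|\overline{S}_t-\overline{T}_t|$ via the derivative bound on $\Phi$, and then bounds $P[Y_s\neq 0]$ by $E[Y_s]$ (written there as an integral stopped at the first jump time, which amounts to your Markov inequality). You are also right that $M$ must be read as $\max\{|\Phi'(x)|:x\in[-C,C]\}=\max\{-\Phi'(x):x\in[-C,C]\}$, which is finite by the standing assumption $-\Phi'\le K/C$.
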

\begin{proof}Let $Y$ be the process that counts the number of buys that only occur for one pricing strategy, i.e.

\[ Y_t:=\sum_{i\in \bbn} 1_{\left\{\tau_i\leq t, X_{\tau_i}+\eps_i \in [\min\{\overline{S}_{\tau_i}, \overline{T}_{\tau_i}\},\max\{\overline{S}_{\tau_i}, 
\overline{T}_{\tau_i}\})\right\}}.\]

We now show (essentially with the methods of the proof of Lemma \ref{rate}) that the $\bbf$-intensity of $Y$ is given by

\begin{align*}
\lam^Y_t:= &\rate (\Phi(\min\{\overline{S}_t, \overline{T}_t\}-X_{t-}) - \Phi(\max\{\overline{S}_t, \overline{T}_t \}-X_{t-}))\leq \rate M |\overline{S}_t-\overline{T}_t|.\\
\end{align*}

Let $C$ be a nonnegative $\bbf$-predictable process. As $\ask$  and $\overline{T}$ are $\bbf$-predictable and $P[X_{\tau_i}= X_{\tau_i-}]=1$, we obtain

\begin{align*}
E\left[\int_0^\infty C_s dY_s\right] &= E\left[\sum_{i=1}^{\infty} C_{\tau_i} 1_{\{\min\{\overline{S}_{\tau_i}, \overline{T}_{\tau_i}\}\leq X_{\tau_i}+\eps_i < \max\{\overline{S}_{\tau_i}, \overline{T}_{\tau_i} \}\}}\right]\\
&= \sum_{i=1}^{\infty} E\left[ E\left[C_{\tau_i}1_{\{\min\{\overline{S}_{\tau_i}, \overline{T}_{\tau_i}\}\leq X_{\tau_i}+\eps_i < \max\{\overline{S}_{\tau_i}, \overline{T}_{\tau_i} \}\}}|\F_{\tau_i-}\right]\right]\\
&= \sum_{i=1}^{\infty} E\left[ C_{\tau_i}\lam^Y_{\tau_i}\right]= E\left[\int_0^\infty C_s \lam^Y_s ds\right].
\end{align*}

We define $\tau_Y:=\inf\{t\geq 0\ |\ Y_t=1\}$ and get

\begin{align*}
P[(A^1_s)^c]&= P[Y_s \neq 0] = E\left[\int_0^s 1_{\{\tau_Y\geq u\}}dY_u\right] = E\left[\int_0^s 1_{\{\tau_Y\geq u\}}\lam^Y_u du\right]\\
&\leq E \left[\int_0^s \lam^Y_u du\right] \leq \rate M  E \left[\int_0^s|\overline{S}_u-\overline{T}_u|du\right].\qedhere
\end{align*}
\end{proof}

The same holds true for sells. Hence for 
\[A^2_s:= \{ X_{\tau_i}+\eps_i \notin (\min\{\underline{S}_{\tau_i}, \underline{T}_{\tau_i}\},\max\{\underline{S}_{\tau_i}, \underline{T}_{\tau_i}\}]\mbox{ for all } \tau_i \leq s\}\]
we obtain a similar estimate and for $A_s := A^1_s \cap A^2_s$, which is the event that the same buys and sells are observed in the two pricing strategies $S$ and $T$ we have 

\begin{equation} \label{aesti}
P\left[ A_s^c\right] \leq 2 \rate M E \left[\int_0^s  \norm{S_u-T_u}du  \right]\qquad\mbox{where}
\end{equation}

\[\norm{S_u-T_u}:= \max\left\{|\overline{S}_u-\overline{T}_u|,|\underline{S}_u-\underline{T}_u|\right\}.\]

\begin{lemma} \label{contr}
There is a constant $K_1 < \infty$ such that
\[E\left[\int_0^t \norm{F(S)_s - F(T)_s} ds \right] \leq (K + t K_1)  E\left[\int_0^t\norm{S_s-T_s} ds \right] \]
for all $t\geq0$ and for $K$ from Theorem \ref{main}.
\end{lemma}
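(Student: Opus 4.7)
The plan is to split $\norm{F(S)_s - F(T)_s}$ into a part that depends linearly on $S - T$ (yielding the factor $K$) and a part that depends on the filter difference $\pi^S - \pi^T$, and then to prove a stability estimate showing the filter difference is controlled by the integrated price difference.

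\textbf{Step 1: Pointwise Lipschitz decomposition.} Applying Lemma~\ref{lippi} componentwise (using the analogous Lipschitz property for $h$ stated in the remark following the proof of Theorem~\ref{solstat}), I obtain the pathwise bound
\[
\norm{F(S)_s - F(T)_s} \;\leq\; K\,\norm{S_s - T_s} \;+\; L\sum_{i=1}^n|\pi^{S,i}_{s-} - \pi^{T,i}_{s-}|.
\]
Taking expectation and integrating in $s$ reduces the claim to producing a constant $\tilde K_1$ with
\[
\int_0^t D_s\,ds \;\leq\; \tilde K_1\,t\,E\!\!\int_0^t\norm{S_u-T_u}du, \qquad D_s := E\sum_{i=1}^n|\pi^{S,i}_{s-} - \pi^{T,i}_{s-}|,
\]
after which $K_1 := L\tilde K_1$ finishes the proof.

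\textbf{Step 2: Splitting on the event $A_s$.} Since $\pi^S,\pi^T$ are probability vectors, $\sum_i|\pi^{S,i}-\pi^{T,i}|\leq 2$, and combined with estimate (\ref{aesti}),
\[
E\!\Big[\,1_{A_s^c}\!\sum_i|\pi^{S,i}_{s-} - \pi^{T,i}_{s-}|\,\Big] \;\leq\; 2\,P[A_s^c] \;\leq\; 4\rate M\,E\!\!\int_0^s\norm{S_u - T_u}du.
\]
On the event $A_s$, the buy and sell times of the two strategies coincide up to time $s$, so I compare $\pi^{S,i}_s$ and $\pi^{T,i}_s$ through the SDE of Lemma~\ref{filtereq}, writing their difference as a drift integral plus sums over the \emph{common} trade times. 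By Lemma~\ref{allwbuy} the denominators $\sum_j\pi^j\Phi(\ask-x_j)$ and $\sum_j\pi^j\Psi(\bid-x_j)$ are bounded below by a positive constant; together with boundedness of $\Phi'$, $\Psi'$ on $[-C,C]$, this makes both the jump coefficients $K^{B,i}$, $K^{C,i}$ and the drift of Lemma~\ref{filtereq} jointly Lipschitz in $(\pi,\ask,\bid)$.

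\textbf{Step 3: Assembling the integral inequality.} Bounding the expected sums over jump times by their $\bbf$-intensities (Lemma~\ref{rate}) converts the jump contributions into Lebesgue integrals. The Lipschitz estimates then yield constants $c_1,c_2$ with
\[
D_s \;\leq\; c_1\!\!\int_0^s D_u\,du \;+\; c_2\!\!\int_0^s E\norm{S_u-T_u}du.
\]
A Gronwall-type argument gives $D_s \leq c_2\,e^{c_1 s}\!\int_0^s E\norm{S_u-T_u}du$; integrating once more over $[0,t]$ and applying Fubini yields
\[
\int_0^t D_s\,ds \;\leq\; c_2\,e^{c_1 t}\,t\,E\!\!\int_0^t\norm{S_u-T_u}du,
\]
so $\tilde K_1 := c_2 e^{c_1 t}$ (finite and absorbed into a single constant on any bounded time horizon, which is the regime in which the lemma is invoked for the subsequent contraction argument).

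\textbf{Main obstacle.} The decisive difficulty is the endogeneity of $\bbf^S$: even for nearby strategies $S$ and $T$, the observation processes $N^{B,S}$ and $N^{B,T}$ are genuinely different point processes, so one cannot directly write an SDE for $\pi^S - \pi^T$ against a common driving semimartingale. The event $A_s$ provides the correct mechanism to separate the effect of differing observation times (controlled by Lemma~\ref{lema}) from the effect of differing coefficient values (controlled by Lipschitz continuity of the filter SDE, with positivity of the denominators ensured by Lemma~\ref{allwbuy}). Verifying the uniform Lipschitz bounds on $K^{B,i}$, $K^{C,i}$ and the drift of Lemma~\ref{filtereq}, and tracking the intensities carefully via Lemma~\ref{rate}, is the main technical content.
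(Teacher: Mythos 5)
Your Steps 1--2 (the Lipschitz decomposition from Lemma~\ref{lippi} plus the split along $A_s$ controlled by Lemma~\ref{lema} and (\ref{aesti})) match the paper, but for the core filter-stability estimate you take a genuinely different and much heavier route. The paper never compares the two filter SDEs: it observes that on $A_s$ the trace $\sigma$-algebras coincide, $\F^S_s\cap A_s=\F^T_s\cap A_s$, so that $E[1_{\{X_s=x_i\}}1_{A_s}\mid\F^S_s]$ and $E[1_{\{X_s=x_i\}}1_{A_s}\mid\F^T_s]$ agree a.s.\ on $A_s$, and every correction term is bounded by $P[A_s^c]$; this yields the static bound $E[|\pi^{S,i}_s-\pi^{T,i}_s|]\le 3P[A_s^c]$ with no dynamics, no Lipschitz analysis of the jump and drift coefficients, and no Gronwall. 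Your SDE-comparison argument can be made to work --- the denominators are bounded below by $\Phi(C)$ resp.\ $\Psi(-C)$, both strictly positive under the hypotheses of Theorem~\ref{main}, and the common jump times have intensity at most $\lam$ --- but it is considerably longer, and one detail you gloss over matters: $1_{A_s}$ is not predictable, so when you compensate the jump sums you must drop the indicator and then re-insert the full difference via $E[|\pi^S_u-\pi^T_u|]\le D_u+2P[A_u^c]$ to close the Gronwall loop. What the paper's route buys is precisely the avoidance of all of this.

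There is also a discrepancy with the statement being proved: your Gronwall constant $\tilde K_1=c_2e^{c_1t}$ depends on $t$, so you obtain $E[\int_0^t\norm{F(S)_s-F(T)_s}ds]\le (K+tK_1(t))\,E[\int_0^t\norm{S_s-T_s}ds]$ with $K_1(t)\to\infty$ as $t\to\infty$, not the single constant $K_1$ valid for all $t\ge0$ that the lemma asserts. Since $tK_1(t)\to0$ as $t\to0$, this weaker version still supports the uniqueness argument (choose $t$ small and iterate), but it is not the lemma as stated; the paper's trace-$\sigma$-algebra estimate is linear in $P[A_s^c]$ with an absolute constant and therefore delivers the genuinely $t$-uniform $K_1=12Ln\lam M$.
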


\begin{proof}

First we estimate the difference of the conditional distributions of the true value resulting from different pricing strategies. We obtain

\begin{equation}\label{difcond}
 \begin{aligned}
E\left[ \left|\pi^{S,i}_s-\pi^{T,i}_s\right|  \right] &=E\left[ \left|P\left[X_s=x_i|\F_s^S\right]-P\left[X_s=x_i|\F_s^T\right]\right| \left(1_{A_s^c}+1_{A_s}\right)\right] \\
&\leq E\left[ 1_{A_s^c} \right] +E\left[ \left|E\left[1_{\{X_s=x_i\}}|\F_s^S\right]-E\left[1_{\{X_s=x_i\}}|\F_s^T\right]\right| 1_{A_s\}} \right]\\
& =P\left[ A_s^c\right] +E[ \left|E\left[1_{\{X_s=x_i\}} \left(1_{A_s^c}+1_{A_s}\right)\right|\F_s^S\right]\\
& \qquad-E\left[1_{\{X_s=x_i\}} \left(1_{A_s^c}+1_{A_s}\right)|\F_s^T\right] | 1_{A_s} ]\\
&\leq  3 P\left[ A_s^c\right] +E[ |E\left[1_{\{X_s=x_i\}} 1_{A_s}|\F_s^S\right]\\
&\qquad -E\left[1_{\{X_s=x_i\}} 1_{A_s}|\F_s^T\right] | 1_{A_s} ]\\
&= 3 P\left[ A_s^c\right], \qquad i =1, \ldots, n.
\end{aligned}
\end{equation}

The last equation holds true since $\F^S_s\cap A_s = \F^T\cap A_s$. The equality of the trace $\sigma$-algebras holds due to

\[\F^S_s\cap A_s= \sigma(\{B^S_i \leq u\}, \{C^S_i \leq u\}, u\leq s, i \in \mathbb{N})\cap A_s,\]

and obviously 

\[ \{B^S_i \leq u\}\cap A_s = \{B^T_i \leq u\}\cap A_s \mbox{ and }  \{C^S_i \leq u\}\cap A_s = \{C^T_i \leq u\}\cap A_s\]

respectively for all $i\in \bbn$ and $u\leq s$. Putting (\ref{aesti}) and (\ref{difcond}) together we obtain

\begin{align*}
E\left[\int_0^t \sum_{i=1}^n|\pi_s^{S,i}-\pi_s^{T,i}| ds \right] & = \int_0^t \sum_{i=1}^n E\left[|\pi_s^{S,i}-\pi_s^{T,i}|  \right]ds\\
&\leq \int_0^t  \sum_{i=1}^n    3 P\left[ A_s^c\right]ds\\
&\leq   6 n \rate M \int_0^t  E \left[\int_0^s  \norm{S_u-T_u}du \right]  ds\\
&\leq   6 n \rate M t E \left[\int_0^t  \norm{S_s-T_s} ds\right].  \\
\end{align*}
Finally, we have 

\begin{align*}
E\left[\int_0^t \left|\overline{F(S)}_s - \overline{F(T)}_s\right| ds \right] &= E\left[\int_0^t \left|g\left(\overline{S}_s,\pi_{s-}^S\right) - g\left(\overline{T}_s,\pi_{s-}^T\right) \right| ds \right]\\
&\leq E\left[\int_0^t K |\overline{S}_s-\overline{T}_s| + L \sum_{i=1}^n|\pi_s^{S,i}-\pi_s^{T,i}| ds \right]\\
&\leq E\left[\int_0^t K |\overline{S}_s-\overline{T}_s| +6 L  n \rate M t \norm{S_s-T_s} ds \right],
\end{align*}

where the first inequality is due to Lemma \ref{lippi} for $K<1$  defined in Theorem~\ref{main} and  $L = \frac{2 x_{\max}}{\Phi(C)^2}$. A similar estimate can be obtained for 
\[E\left[\int_0^t \left|\underline{F(S)}_s - \underline{F(T)}_s\right| ds \right].\]
We then get the desired result with $K_1= 12 L n \rate M$.
\end{proof}

\begin{proof}[Proof of uniqueness in Theorem \ref{main}]
Let $S,T$ be two solutions of the GMPS problem. By Theorem \ref{fixsol} every solution is a fixed point of $F$. Applying Lemma \ref{contr} with some $t>0$ s.t. $K+tK_1 < 1$ we obtain that $S$ and $T$ coincide 
$P\otimes\lam|_{[0,t]}$-a.e.. Note that $K$ and $K_1$ only depend on $x_{\min},x_{\max}$, the distribution of the $\eps_i$ and $\lam$, but it is independent of the probabilities $P[X_0=x_i]$. 

But if $S=T$ $P\otimes\lam|_{[0,t]}$-a.e. so are the $P$-completions of  $\F^S_t$ and $\F^T_t$. Iteratively it follows that $S$ and $T$ are equal on $[0,\infty)$ as all arguments from above hold true also for a non-trivial $\F_0$. 
\end{proof}

\subsection{Existence}

To show existence we will proceed as follows. We will define an $n$-dimensional process~$\phi$ as a pathwise solution of a stochastic integral equation, 
which is what we assume the conditional distribution of the true value under the filtration of a GMPS could look like. 
We then define prices as the static solutions for every $(\omega,t)$, plugging in the conditional distribution of the true value, and construct the corresponding market maker's filtration. 
Then, we show that, under the constructed filtration, $\phi$ is adapted and solves the filter equation of the conditional distribution of the true value. This shows with the results in 
Subsection~\ref{21.9.2012.1} that we have indeed constructed a GMPS.

\begin{definition}
Let $\phi \in [0,1]^n$ such that $\sum_{i=1}^n \phi_i=1$. With $G(\phi)$ and $H(\phi)$ we denote the unique solutions $s$ of
\[g(s,\phi)=s \mbox{ and } h(s,\phi)=s\]
respectively where $g$ and $h$ are defined in Definition \ref{defgh} and (\ref{defh}) respectively. The existence and uniqueness of that solution is secured by Theorem \ref{solstat}.
\end{definition}

In the following we still use the notation as before $\Phi(x)=P[\eps_1\geq x]$ and further denote the distribution function of the $\eps_i$ by $\Psi(x) = P[\eps_1 \leq x]$. 

\begin{proof}[Proof of existence in Theorem \ref{main}]

\textit{Step 1:} For  $\phi: \Omega \times  [0,\infty)\to [0,1]^n$ we consider the SDE 

\begin{equation}\label{phi}
 \begin{aligned}
 \phi_t^i  =& \;\phi_0^i +\sum_{\tau_k \leq t}  \phi_{\tau_k-}^i \left(\frac{\Phi(G(\phi_{\tau_k-})-x_i)}{\sum_{j} \phi_{\tau_k-}^j \Phi(G(\phi_{\tau_k-})-x_j)} -1\right)1_{\{X_{\tau_k}+\eps_k \geq G(\phi_{\tau_k-})\}}\\ 
& +\sum_{\tau_k \leq t} \phi_{\tau_k-}^i \left(\frac{\Psi(H(\phi_{\tau_k-})-x_i)}{\sum_{j} \phi_{\tau_k-}^j \Psi(H(\phi_{\tau_k-})-x_j)} -1\right)1_{\{X_{\tau_k}+\eps_k \leq H(\phi_{\tau_k-})\}}  \\
&- \int_0^t \left( \rate \phi_s^i \left(\vphantom{\sum_{j}}\Psi(H(\phi_s)-x_i)+\Phi(G(\phi_s)-x_i)\right. \right.\\
&\left.-\sum_{j} \phi_s^j \left(\Psi(H(\phi_s)-x_j)+ \Phi(G(\phi_s)-x_j)\right)\right)\\
&- \left. \sum_{j} \phi_s^j q(j,i)\right) \;ds
\end{aligned}
\end{equation}
with initial condition $\phi_0^i=P[X_0=x_i]$ for all $i =1, \ldots,n$. In a first step we consider this SDE only pathwise and show that it has a unique solution 
with \cdl paths (we do not yet have a filtration). We start by showing that $G$ (and $H$) are Lipschitz-continuous. 
Let $s,\widetilde{s}$ be such that $G(\phi)=s$, i.e. $g(s,\phi)=s$ and $G(\widetilde{\phi})=\widetilde{s}$. We have

\[|s-\widetilde{s}| = \left|g(s,\phi)-g(\widetilde{s},\widetilde{\phi})\right| \leq K |s-\widetilde{s}| + L \sum_{i=1}^n|\phi_i-\widetilde{\phi}_i|\]

by Lemma \ref{lippi}, where $K<1$ and $L< \infty$. By rearranging we get

\[|G(\phi)-G(\widetilde{\phi})|=|s-\widetilde{s}|\leq \frac{L}{1-K} \sum_{i=1}^n|\phi_i-\widetilde{\phi}_i|.\]

Further, the functions $\Phi$ and $\Psi$ are differentiable and the derivative is bounded by $\frac{K}{C}< \infty$  on the compact set $[-C,C]$.
In addition, $\Phi$ and $\Psi$ are bounded by one. 
By the product rule, it follows that the $ds$-term in (\ref{phi}) considered as a function in $\phi$ can be modified to a function $f(\phi)$ that is 
Lipschitz-continuous and $f$ coincides with the original function for all $\phi\in\bbr^n$ with $\phi^i\ge 0$ and $\sum_{i=1}^n \phi^i=1$. 
Then, the system of ordinary differential equations only consisting of the modified $ds$-terms has a unique solution and, by construction of the ODEs, 
the solution stays in the set of probabilities. 
Thus, it also solves the differential equations with the original $ds$-terms, i.e.
\[
\begin{aligned}
 d\psi_t^i =&-\left(\rate \psi_t^i \left(\vphantom{\sum_{j}}\Psi(H(\psi_t)-x_i)+\Phi(G(\psi_t)-x_i)\right. \right.\\
&\left.-\sum_{j} \psi_t^j \left(\Psi(H(\psi_t)-x_j)+ \Phi(G(\psi_t)-x_j)\right)\right)\\
&- \left. \sum_{j} \psi_t^j q(j,i) \right) dt.
\end{aligned}
\]

We can now construct a candidate for the original problem up to $\tau_1$ by this solution, i.e. $\phi_t:=\psi_t$ for all $t< \tau_1$, and 

\[
\begin{aligned}
 \phi^i_{\tau_1} = &\psi^i_{\tau_1-} +  \psi_{\tau_1-}^i \left(\frac{\Phi(G(\psi_{\tau_1-})-x_i)}{\sum_{j} \psi_{\tau_1-}^j \Phi(G(\psi_{\tau_1-})-x_j)} -1\right)1_{\{X_{\tau_1}+\eps_1 \geq G(\psi_{\tau_1-})\}}\\ 
& +\psi_{\tau_1-}^i \left(\frac{\Psi(H(\psi_{\tau_1-})-x_i)}{\sum_{j} \psi_{\tau_1-}^j \Psi(H(\psi_{\tau_1-})-x_j)} -1\right)1_{\{X_{\tau_1}+\eps_1 \leq H(\psi_{\tau_1-})\}}. 
\end{aligned}
\]

We also obtain a solution $\widetilde{\psi}$ of the ordinary differential equation above for every state of $\phi^i_{\tau_1}$ as initial condition. Considered as a parameter-depending 
differential equation, the solution is continuous in the initial condition.
We then define a solution  of the original problem on $(\tau_1, \tau_2)$ by

\[\phi_t= \widetilde{\psi}_{t-\tau_1}\]

and so on. Iteratively we obtain a process that satisfies (\ref{phi}) up to all $\tau_i$. Then, one may define 
$\phi^i_t(\omega)=1/n$ for $t\in\bbr_+$ with $t\ge \sup_{i\in\bbn}\tau_i(\omega)$. 
As $\tau_i$ are Poisson times, 
this definition of course only affects a $P$-null set, but the construction ensures measurability (see Step~2) without needing the usual conditions and without the additional 
assumption that $\sup_{i\in\bbn}\tau_i(\omega)=\infty$ for all $\omega\in\Omega$. The process $\phi: \Omega \times \bbr_+ \mapsto \bbr^n$ 
has \cdl paths  at least on $[0,\sup_{i\in\bbn}\tau_i(\omega))$.

\textit{Step 2:} We now define $S_t:=(G(\phi_{t-}),H(\phi_{t-}))$ on $(0,\sup_{i\in\bbn}\tau_i(\omega))$ (and maybe $S=(x_{\max},x_{\min})$ elsewhere)
and with it $N^B, N^C$ (with jump times $B_k$ resp. $C_k$) 
and $\bbf^S$ according to (\ref{defN}) and (\ref{deffs}) resp. 
It follows that the jumps in (\ref{phi}) only take place at actual buys and sells with prices $S$. 
Therefore and by the construction of $\phi$ (using the continuity in the initial condition), 
for every $t\in\bbr$, $\phi_t$ can be written as a measurable function
of $B_k 1_{\{B_k\le t\}}$ and $C_k 1_{\{C_k\le t\}}$, $k\in\bbn$. Thus $\phi_t$ is  $\mathcal{F}^S_t$-measurable, i.e. $\phi$ is  $\bbf^S$-adapted. 
It follows that the process $S$ that is left-continuous on $(0,\sup_{i\in\bbn}\tau_i(\omega))$
is $\bbf^S$-predictable and hence admissible in the sense 
of Definition~\ref{admis}. Note that by the pathwise construction we obtain pricing strategies that are $\bbf^S$-predictable and not only predictable w.r.t. the completed 
filtration $\widetilde{\bbf}^S$ that satisfies the usual conditions. By (\ref{defN}) we can write (\ref{phi}) as

\begin{equation}\label{phi2}
 \begin{aligned}
 d\phi_t^i  =& \phi_{t-}^i \left(\frac{\Phi(\ask_t-x_i)}{\sum_{j} \phi_{t-}^j \Phi(\ask_t-x_j)} -1\right) dN^B_t\\ 
& +\phi_{t-}^i \left(\frac{\Psi(\bid_t-x_i)}{\sum_{j} \phi_{t-}^j \Psi(\bid_t-x_j)} -1\right) dN^C_t \\
&- \left( \rate \phi_t^i \left(\vphantom{\sum_{j}}\Psi(\bid_t-x_i)+\Phi(\ask_t-x_i)\right. \right.\\
&\left.-\sum_{j} \phi_{t}^j \left(\Psi(\bid_t-x_j)+ \Phi(\ask_t-x_j)\right)\right)\\
&- \left. \sum_{j} \phi_t^j q(j,i) \right) dt.
\end{aligned}
\end{equation}

\begin{figure}[htbp]
\begin{center}
 \includegraphics[scale=0.65]{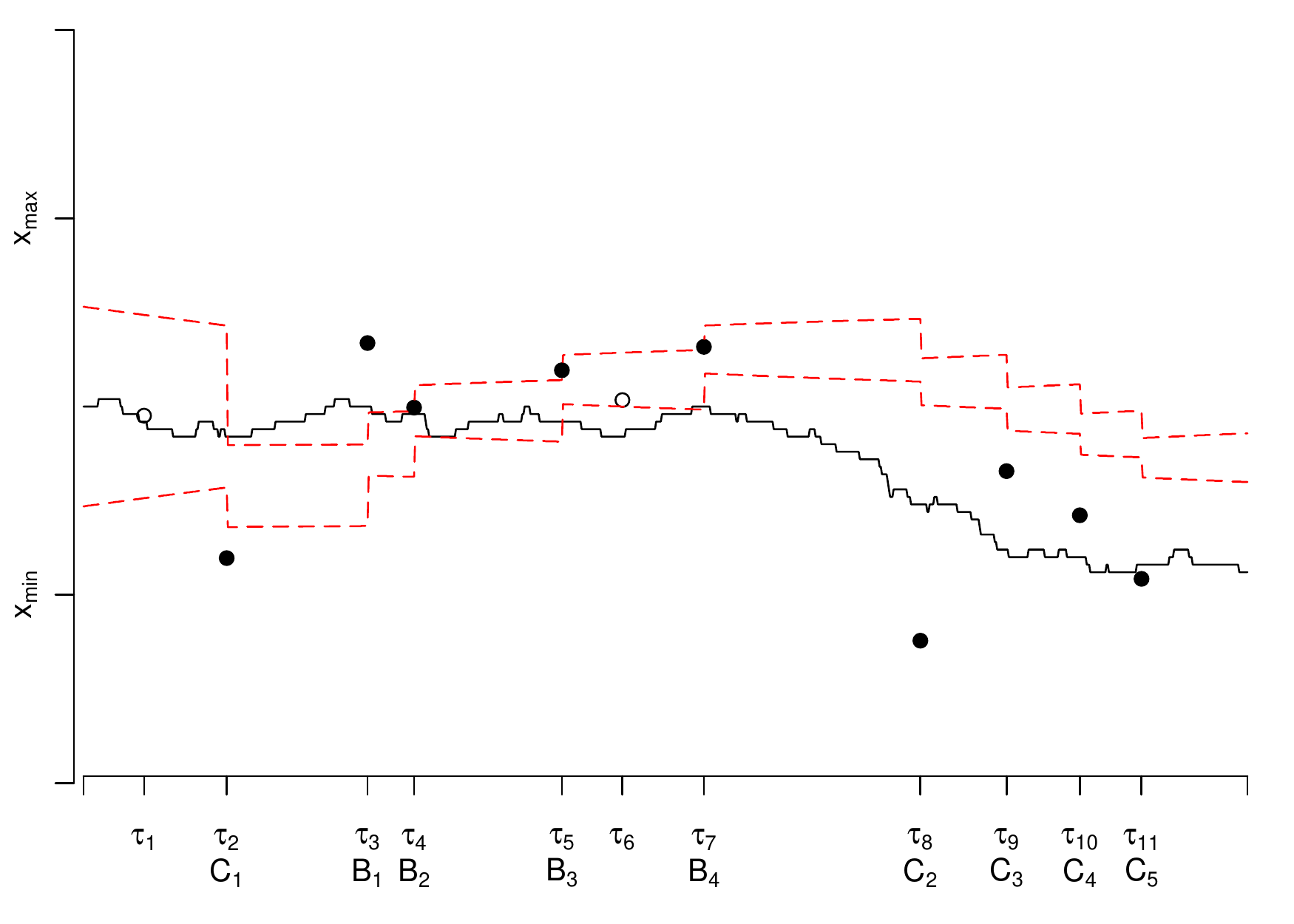}
\end{center}
\caption{Glosten-Milgrom-prices for the same scenario $\omega$ as in Figure \ref{fig1}.}
\label{fig3}
\end{figure}

\textit{Step 3:} We  described the filter equation of $\pi_t^{S,i} = P\left[X_t=x_i| \F_{\tau}^S\right]$ in  Lemma~\ref{filtereq}. It is given by
\begin{equation}\label{filtereq2}
 \begin{aligned}
 d\pi_t^{S,i}  =& \pi_{t-}^{S,i} \left(\frac{\Phi(\ask_t-x_i)}{\sum_{j} \pi_{t-}^{S,j} \Phi(\ask_t-x_j)} -1\right) dN^B_t\\ 
& +\pi_{t-}^{S,i} \left(\frac{\Psi(\bid_t-x_i)}{\sum_{j} \pi_{t-}^{S,j} \Psi(\bid_t-x_j)} -1\right) dN^C_t \\
&- \left(\rate \pi_t^{S,i} \left(\vphantom{\sum_{j}}\Psi(\bid_t-x_i)+\Phi(\ask_t-x_i)\right. \right.\\
&\left.-\sum_{j} \pi_{t}^{S,j} \left(\Psi(\bid_t-x_j)+ \Phi(\ask_t-x_j)\right)\right)\\
&+\left. \sum_{j} \pi_t^{S,j} q(j,i) \right) dt.
\end{aligned}
\end{equation}

Note that $S$ depends on $\phi$ and is fixed in (\ref{filtereq2}). In terms of $\pi^S$ (\ref{filtereq2}) has a unique solution  
and $\phi$ is obviously a solution of this equation (uniqueness follows as the $dt$-term considered as a function of $\pi^S$ is Lipschitz-continuous,
thus the arguments are similar but simpler as for (\ref{phi})). 
Thus, as $\phi$ and $\pi^S$ are both \cdl, they are indistinguishable. We then get for the ask price

\[\overline{F(S)}_t= g\left(\ask_t,\pi^S_{t-}\right) = g\left(G(\phi_{t-}),\phi_{t-}\right) = G(\phi_{t-}) = \ask_t,\]

also up to indistinguishability and thus $(P\otimes \lambda)$-a.e. As the same holds for the bid price Theorem~\ref{fixsol} shows that $S$ is a GMPS and 
existence is proven.
\end{proof}

\bibliography{references}{}
\bibliographystyle{alpha}

\end{document}